


\documentclass[12pt,a4paper]{article}
\usepackage{tikz}
\usepackage{microtype}
\usepackage{geometry}                
\usepackage{amssymb,amsmath,amsthm}
\usepackage{color}
\usepackage{hyperref}
\usepackage{wrapfig}
\usepackage{url}
\usepackage{graphicx}
\usepackage{ifxetex}
\ifxetex
\usepackage{fontspec,xltxtra,xunicode}
\usepackage{unicode-math}
\defaultfontfeatures{Mapping=tex-text}
\newfontfamily\cyrillicfont{STIX Two Text}
\setromanfont[Mapping=tex-text]{STIX Two Text}
\setmathfont[Scale=MatchLowercase]{STIX Two Math}
\setmonofont[Scale=0.9]{PT Mono}
\usepackage{polyglossia}
\setdefaultlanguage{english}
\setotherlanguage{russian}
\else
\usepackage{mathptmx}
\fi

\newtheorem{theorem}{Theorem}
\newtheorem{proposition}[theorem]{Proposition}

\newtheorem{lemma}{Lemma}[theorem]
\theoremstyle{definition}
\newtheorem{definition}{Definition}
\theoremstyle{remark}
\newtheorem{remark}{Remark}
\newtheorem{question}{Question}

\newcommand{\cnd}{\mskip 2mu | \mskip 2mu}
\DeclareMathOperator{\KS}{\mathrm{C}\mskip 1mu}

\hyphenation{Kol-mo-go-rov}
\emergencystretch=2mm
\makeatletter
\newcommand*{\gl}{\nobreak\hskip\z@skip}
\DeclareRobustCommand*{\hyh}{\gl\hbox{-}\gl}

\title{Plain stopping time and\\ conditional complexities revisited}
\date{}
\author{Mikhail Andreev\thanks{IPONWEB, Berlin}, Gleb Posobin\thanks{National Research University Higher School of Economics, Moscow}, Alexander Shen\thanks{LIRMM CNRS / University of Montpellier, France. On leave from IITP RAS, Moscow. Supported by ANR-15-CE40-0016-01 RaCAF grant.}}
\begin{document}
\maketitle

\begin{abstract}
In this paper we analyze the notion of ``stopping time complexity'', the amount of information needed to specify when to stop while reading an infinite sequence. This notion was introduced by Vovk and Pavlovic~\cite{vovk-pavlovic}. It turns out that plain stopping time complexity of a binary string $x$ could be equivalently defined as (a)~the minimal plain complexity of a Turing machine that stops after reading $x$ on a one-directional input tape; (b)~the minimal plain complexity of an algorithm that enumerates a prefix-free set containing~$x$; (c)~the conditional complexity $\KS(x\cnd x*)$ where $x$ in the condition is understood as a prefix of an infinite binary sequence while the first $x$ is understood as a terminated binary string; (d) as a minimal upper semicomputable function $K$ such that each binary sequence has at most $2^n$ prefixes $z$ such that $K(z)<n$; (e) as $\max\KS^X(x)$ where $\KS^X(z)$ is plain Kolmogorov complexity of $z$ relative to oracle $X$ and the minimum is taken over all extensions $X$ of~$x$.

We also show that some of these equivalent definitions become non\hyh equivalent in the more general setting where the condition $y$ and the object $x$ may differ, and answer some open question from Chernov, Hutter and~Schmidhuber~\cite{chernov-et-al}.

\end{abstract}


\section{Introduction: stopping time complexity}\label{sec:intro}

Imagine that you explain to someone which exit on a long road she should take. You can just say ``$N$th exit''; for that you need $\log N$ bits. You may also say something like ``the first exit after the first bridge'', and this message has bounded length even if the bridge is very far away.\footnote{We do not allow, however, the description ``the last exit before the bridge'', since it uses information that is unavailable at the moment when we have to take the exit.}

More formally, consider a machine with one-directional read-only input tape that contains bits $x_0,x_1,\ldots x_n,\ldots$. We want to program the machine in such a way that it stops after reading bits $x_0,\ldots,x_{n-1}$ (and never sees $x_n$ and the subsequent bits).  Obviously, the complexity of this task does not depend on the values of $x_n,x_{n+1},\ldots$, because the machine never sees them, so this complexity should be a function of a bit string $x=x_0x_1\ldots x_{n-1}$. It can be called the ``stopping time complexity'' of $x$.

Such a notion was introduced recently by Vovk and Pavlovich~\cite{vovk-pavlovic}. In their paper an ``interactive'' version of stopping time complexity is considered where even ($x_{2n}$) and odd ($x_{2n+1}$) terms are considered differently, but this is just a special case, so we do not consider this setting. It turns out that the stopping time complexity is a special case of conditional Kolmogorov complexity with structured conditions. (In this paper we consider the plain version of stopping time complexity and postpone similar questions for prefix versions.)

The Kolmogorov complexity was introduced independently by Solomonoff, Kolmogorov, and Chaitin to measure the ``amount of information'' in a finite object (say, in a binary string).  One can also consider the \emph{conditional} version of complexity where some other object (a \emph{condition}) is given ``for free''. Later different versions of Kolmogorov complexity appeared (plain, prefix, a priori, monotone complexities). We assume that the reader is familiar with basic notions of algorithmic information theory, see, for example,~\cite{shen-survey} for a short introduction and~\cite{usv} for a detailed exposition.

For the the plain version of stopping time complexity we prove the equivalence between five different definitions (Section~\ref{sec:positive}). First, we show that it can be equivalently defined as (1)~the minimal plain complexity of a machine with one-way read-only input tape that stops after reading $x$, or (2)~the minimal enumeration complexity of a prefix-free set that contains $x$. Then we show how the stopping time complexity can be expressed in terms of plain conditional complexity that is monotone with respect to conditions. Namely, we prove that (3)~the stopping time complexity equals $\KS(x\cnd x*)$ where $x$ is used both as an object and a condition. Of course, according to standard definitions, the complexity $\KS(x\cnd x)$ is $O(1)$, but now we treat these two strings $x$ differently (and use a star in the notation to stress this). One may say that the topologies in the space of objects and the space of conditions are different. The first $x$ (object to be described) is considered as an isolated object (terminated string). The second $x$ (in the condition) is considered as a prefix of an infinite sequence. In~\cite[Section 6.3]{usv} this approach is described in general (see also~\cite{shenf0} for even more general setting); to make this paper self-contained, we give all necessary definitions for our special case. We call this version of complexity ``monotone-conditional complexity'' since this function is monotone with respect to the condition. Then we provide a characterization of stopping time complexity in 	quantitaive terms proving that (4)~stopping time complexity is the minimal upper semicomputable function satisfying some restrictions (no more than $2^n$ prefixes of any given sequence could have complexity at most $n$). Finally, we point out the connections with the relativized version $\KS^A(x)$ of plain complexity and prove that (5)~the stopping time complexity of a binary string $x$ is the maximal value of $\KS^A(x)$ for all oracles (infinite bit sequences) $A$ that have prefix $x$.

Having such a robust definition for plain stopping time complexity, one may ask whether similar characterizations can be obtained for a more general notion of $\KS(y\cnd x*)$ where $x$ and $y$ are arbitrary strings. Unfortunately, here the situation is much worse, and we prove mostly negative results (Section~\ref{sec:neg}). We show that while $\KS(y\cnd x*)$ can be defined as a minimal plain complexity of a prefix-stable program that maps $x$ to $y$ (Theorem~\ref{thm:monot-plain-char}), it cannot be defined as a minimal plain complexity of a prefix-free program that maps some prefix of $x$ to $y$ (Theorem~\ref{thm:stable-not-free}; this result answers a question posed in~\cite{chernov-et-al}). Then we show that the attempt to define $\KS(y\cnd x*)$ by quantitative restrictions also fails: we get another function that may be up to two times less (Theorem~\ref{thm:no-quant-char}).

\section{Equivalent definitions}\label{sec:positive} 

\subsection{Machines and prefix-free sets}\label{subsec:mach-prefix}

Consider a Turing machine $M$ that has one-directional read-only input tape with binary alphabet, and a work tape with arbitrary alphabet (or many work tapes). Let $x$ be a binary string. We say that $M$ \emph{stops at $x$} if $M$, being started with the input tape $x$ (and an empty work tape, as usual), reads all the bits of $x$ and stops without trying to read more bits. (We assume that initially the input head is on the left of $x$, so it needs to move right before seeing the first bit of $x$.) For a given $x$, we may consider the \emph{minimal plain Kolmogorov complexity of a machine $M$ that stops at $x$}. This quantity is independent (up to $O(1)$-additive term) of the details of the definition (work tape alphabet, number of work tapes, etc.) since computable conversion algorithms exist, and a computable transformation may increase complexity only by $O(1)$. 

\begin{definition}\label{def:plain-stopping}
We call this quantity the \emph{plain stopping time complexity} of $x$.
\end{definition}

Here is a machine-independent equivalent characterization of the plain stopping time complexity.

\begin{theorem}\label{thm:mach-prefix}
Plain stopping time complexity of $x$ equals \textup(up to an $O(1)$ additive term\textup) the minimal complexity of a program that enumerates some prefix-free set containing~$x$.
\end{theorem}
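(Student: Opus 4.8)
The plan is to prove the two inequalities separately, since the theorem asserts equality up to an additive constant. First I would handle the direction "enumeration complexity $\geq$ stopping time complexity minus $O(1)$": given a program $p$ that enumerates a prefix-free set $S$ containing $x$, I would construct a machine $M_p$ with a one-way input tape that simulates the enumeration of $S$ while simultaneously consuming input bits. Concretely, $M_p$ reads input bits one at a time, maintaining the prefix $z$ read so far, and in parallel dovetails the enumeration of $S$; it stops exactly when it discovers that the currently-read prefix $z$ belongs to $S$. Since $S$ is prefix-free and contains $x$, no proper prefix of $x$ lies in $S$, so $M_p$ does not halt prematurely; and once all of $x$ has been read, $x \in S$ will eventually be enumerated, so $M_p$ halts at $x$. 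The map $p \mapsto M_p$ is computable, hence $\KS(M_p) \leq \KS(p) + O(1)$, giving the bound.

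For the reverse direction, "stopping time complexity $\geq$ enumeration complexity minus $O(1)$": given a machine $M$ that stops at $x$, I would let $S_M$ be the set of all binary strings $z$ such that $M$, run on input $z\ast$ (i.e.\ $z$ followed by arbitrary further bits), reads exactly $z$ and then halts. This set is enumerable uniformly from $M$: dovetail simulations of $M$ on all finite inputs, and whenever $M$ halts having consumed its entire current input without requesting another bit, output that input string. It is prefix-free because once $M$ halts it has committed to not reading any further bit, so no extension of a string in $S_M$ can also be in $S_M$ (the head position at halting is determined by the bits already read, which are the same for a string and its extensions up to the halting point). Finally $x \in S_M$ by the hypothesis that $M$ stops at $x$. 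Again $M \mapsto (\text{program enumerating } S_M)$ is computable, so the enumeration complexity of a prefix-free set containing $x$ is at most $\KS(M) + O(1)$.

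The main subtlety, and the step I would be most careful about, is the prefix-freeness of $S_M$ and the corresponding "no premature halt" property in the first direction — i.e.\ making precise that a one-way machine's halting behavior depends only on the bits it has actually scanned, so that halting on input $z$ (having read all of $z$) forces the same halting on every extension $z\ast$ at the same point. This is intuitively clear from the one-directional read-only nature of the input tape, but it should be stated carefully, perhaps by noting that the computation of $M$ on $z\ast$ up to the first $|z|+1$ attempted reads is independent of the bits beyond position $|z|$. A second minor point is that in the first direction $M_p$ must keep reading input even while waiting for the enumeration to catch up; one should check that this does not cause it to overshoot $x$, which holds precisely because $S$ is prefix-free so the stopping condition "current prefix $\in S$" can be met only at $z = x$ among prefixes of $x$. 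Everything else is routine simulation and the standard fact that computable transformations change plain complexity by at most $O(1)$.
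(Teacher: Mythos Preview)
Your direction ``machine $\to$ prefix-free enumeration'' is fine and matches the paper's argument exactly.

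The other direction, however, has a genuine gap. Your machine $M_p$ ``reads input bits one at a time \ldots\ and in parallel dovetails the enumeration of $S$,'' halting when the currently-read prefix $z$ is found in $S$. The problem is timing: nothing prevents $M_p$ from reading past $x$ before $x$ has been enumerated. Once it has read $xb$ for some further bit $b$, the ``current prefix'' is $xb$, and since $S$ is prefix-free, $xb\notin S$; the machine keeps reading and never halts (or halts only if some extension of $x$ happens to lie in $S$, which need not be the case). Either way it does not \emph{stop at $x$} in the required sense of reading exactly $|x|$ bits. Your final paragraph notices the overshoot worry but answers the wrong question: you argue that among \emph{prefixes} of $x$ the condition ``$z\in S$'' fires only at $z=x$, which is true but irrelevant once $M_p$ has already read a proper \emph{extension} of $x$. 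And you cannot fix this by simply having $M_p$ wait after each bit, because if the current prefix $z$ is genuinely not in $S$, waiting forever for it will stall the machine.

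The paper's construction supplies the missing idea: read the next bit only when it is \emph{provably safe} to do so. With current prefix $v$, continue enumerating $S$ until some string comparable with $v$ appears; if $v$ itself appears, halt; if a proper extension of $v$ appears, prefix-freeness guarantees $v\notin S$, so it is now safe to read one more bit. Since $x\in S$, at every proper prefix $v$ of $x$ the enumeration will eventually produce $x$ (an extension of $v$), so the machine does advance; and at $v=x$ it will eventually see $x$ itself and halt without reading further. This ``wait for a witness before advancing'' step is the crux of the direction you attempted.
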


(A set of strings is called \emph{prefix-free} if it does not contain a string and its proper prefix at the same time.)

\begin{proof}
One direction is simple: For a Turing machine $M$ of the type described the set $\{x\colon \text{$M$ stops at $x$}\}$ is enumerable (we may simulate all runs) and prefix-free (if $M$ stops at some $x$, then for every extension $y$ of $x$ the machine $M$ will behave in the same way on tapes $x$ and $y$, so $M$ with input $y$ stops after reading $x$ and never reads the rest of $y$). This computable conversion (of a machine into an enumeration program) increases complexity at most by $O(1)$.

The other direction is a bit more complicated. Imagine that we have a program that enumerates some prefix-free set $U$ of strings. How can we construct a machine that stops exactly at the strings in $U$? Initially no bits of $x$ are read. Enumerating $U$, we wait until some element $u$ of $U$ appears. (If this never happens, the machine never stops, and this is OK.) If $u$ is empty, machine stops. In this case $U$ cannot contain non-empty strings (being prefix-free), so the machine's behavior is correct. If $u$ is not empty, we know that empty string is not in $U$ (since $U$ is prefix-free), so we may read the first bit of $x$ without any risk, and get some one-bit string $v$. Then we wait until $v$ or some extension of $v$ appears in $U$ (it may have already happened if $u$ is an extension of $v$). If $v$ itself appears, the computation stops; if a proper extension of $v$ appears, then $v$ is not in $U$ and we can safely read the next bit, etc. It is easy to check that indeed this machine stops at some $x$ if and only if $x$ belongs to $U$.
\end{proof}

\subsection{Monotone-conditional complexity}\label{subsec:monot-cond}

In this section we show how the stopping time complexity can be obtained as a special case of some general scheme~\cite{shenf0,us,usv}. This scheme can be used to define different versions of Kolmogorov complexity. We consider \emph{decompressors}, called also \emph{description modes}. In our case decompressor is a subset $D$ of the set
$$
(\text{descriptions})\times(\text{conditions})\times(\text{objects}).
$$
Here descriptions, conditions, and objects are binary strings. If $(p,x,y)\in D$, we say that \emph{$p$ is a description of $y$ given $x$ as condition},. We define the \emph{conditional complexity of $y$ given $x$} (with respect to the description mode $D$) as the length of the shortest description. The different versions of complexity correspond to different topologies on the spaces involved, and imply different restrictions on description modes. This is explained in~\cite{us} or \cite[Chapter 6]{usv}, and we do not go into technical details here. Let us mention only that descriptions and objects can be considered as isolated entities (terminated strings, natural numbers) or prefixes of an infinite sequence (extension of a string provides more information than the string itself). In this way we get four classical versions of complexity:
\begin{center}
\begin{tabular}{|c||c|c|}
\hline
     & isolated descriptions & descriptions as prefixes\\
\hline\hline
  isolated objects & plain complexity & prefix complexity\\
\hline
  objects as prefixes & decision complexity & monotone complexity\\
\hline
\end{tabular}
\end{center}
As noted in~\cite{usv}, one can also consider different structures on the condition space, thus getting eight versions of complexity instead of four in the table. In this paper we use only one of them: objects and descriptions are isolated objects, and conditions are considered as prefixes. (Vovk and Pavlovic~\cite{vovk-pavlovic} consider also another version of stopping time complexity that corresponds to the other topology on the description space, but we do not consider these versions now.)

To make this paper self-contained, let us give the definitions tailored to the special case we consider (the plain version of monotone-conditional complexity). The general scheme of a complexity definition reduces in this case to the following definition. Consider a set $D$ of triples $(p,x,y)$ where $p,x,y$ are binary strings. If $(p,x,y)$ is in $D$, we say \emph{that $p$ is a description of~$y$ with condition~$x$}. The set $D$ should satisfy the following requirements:
\begin{itemize}
\item $D$ is (computably) enumerable;
\item for every $p$ and $x$ there exists at most one $y$ such that $(p,x,y)\in D$;
\item if $(p,x,y)\in D$ and $x$ is a prefix of some $x'$, then $(p,x',y)\in D$.
\end{itemize}
Sets that satisfy these requirements are called \emph{description modes}. The last requirement reflects the idea that $x$ is considered as a known prefix of a yet unknown infinite sequence; if $x'$ extends $x$, then $x'$ contains more information than~$x$ and can be used instead of~$x$. To stress this kind of monotonicity, we use $*$ in the notation suggested by the following definition.

\begin{definition}\label{def:plain-monot-cond}
For a given description mode $D$, we define the function
$$
\KS_D(y\cnd x*)=\min\{ |p|\colon (p,x,y)\in D\} 
$$
and call it \emph{monotone-conditional complexity of $y$ with condition $x$ with respect to description mode $D$}.
\end{definition}
By definition, if $x$ is a prefix of some $x'$, the same description can be used, so $\KS_D(y\cnd x')\le\KS_D(y\cnd x)$. Therefore, this function is indeed monotone with respect to the condition in a natural sense.

One could also use a name \emph{plain monotone-conditional complexity} to distinguish this notion from prefix monotone-conditional complexity that can be defined in a similar way by adding the monotonicity restriction along the $p$-coordinate.

\begin{proposition}[Solomonoff--Kolmogorov's optimality theorem]\label{prop:plain-version}
There exists a description mode $D$ that makes $\KS_D$ minimal up to $O(1)$ additive term in the class of all functions $\KS_{D'}$ for all description modes $D'$.
\end{proposition}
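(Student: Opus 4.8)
The plan is to run the standard Solomonoff--Kolmogorov universality argument, adapted to our three-component description modes, and the only thing that needs checking is that the construction stays inside the class of \emph{description modes} as defined above (enumerability, functionality in $(p,x)$, and the monotonicity-in-$x$ closure condition). First I would fix a computable enumeration $D_1, D_2, \ldots$ of all description modes; this is possible because the three requirements are $\Pi^0_1$ plus enumerability, but the usual trick handles this: take an effective enumeration of all enumerable sets of triples $(W_e)_{e\geq 1}$, and from each $W_e$ produce a set $D_e$ that agrees with $W_e$ as long as the requirements have not yet been seen to be violated, and otherwise is ``frozen''/patched so as to be a legitimate description mode. Concretely, when a new triple $(p,x,y)$ is about to be enumerated into $D_e$, I add it (together with all $(p,x',y)$ for $x'$ extending $x$ that have appeared so far, to maintain the closure) only if it does not conflict with functionality, i.e.\ no $(p,x,y')$ with $y'\neq y$ and no $(p,\tilde x,y')$ with $\tilde x$ comparable to $x$ and $y'\neq y$ is already present; if a conflict arises we simply skip that triple. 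This yields a computable sequence $(D_e)$ of genuine description modes such that every description mode occurs among them.

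Next I define the universal mode $D$ by
$$
D = \{\,(\hat e\, p,\ x,\ y)\colon e\geq 1,\ (p,x,y)\in D_e\,\},
$$
where $\hat e$ is a self-delimiting (prefix) encoding of the index $e$, so that $\hat e\,p$ uniquely determines $e$ and $p$. I then verify the three conditions for $D$: enumerability is immediate from joint enumerability of the $D_e$; functionality holds because $\hat e\,p$ determines $e$ and $p$, and then $(p,x,y)\in D_e$ has at most one $y$; and the monotonicity closure for $D$ follows coordinate-wise from the same property of each $D_e$, since the prefix $\hat e\,p$ and the object $y$ are untouched when $x$ is extended. Finally, for any description mode $D'=D_e$ and any $y,x$, a shortest description $p$ of $y$ given $x$ in $D_e$ gives the description $\hat e\,p$ in $D$, so
$$
\KS_D(y\cnd x*)\ \leq\ |\hat e| + \KS_{D_e}(y\cnd x*)\ =\ \KS_{D'}(y\cnd x*) + O_{D'}(1),
$$
with the constant $|\hat e|$ depending only on $D'$, which is exactly the asserted optimality.

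The main obstacle — and the only non-routine point — is the first step: making sure the enumeration of description modes is effective despite the functionality and closure requirements being ``global'' constraints rather than something one can check locally at enumeration time. The patching described above resolves this, but one must argue that the patched $D_e$ (i) is still enumerable, (ii) really satisfies all three requirements, and (iii) coincides with the original $W_e$ whenever $W_e$ was already a valid description mode, so that no description mode is lost. Point (iii) is what guarantees universality over \emph{all} description modes and not merely over the patched ones. Everything after that is the textbook argument, essentially identical to the proof of the optimality theorem for plain or conditional Kolmogorov complexity in~\cite{usv}, so I would present it briefly.
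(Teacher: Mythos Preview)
Your proposal is correct and is essentially the paper's own proof: enumerate all c.e.\ sets of triples, patch each one on the fly (close under extensions of $x$, discard any triple whose addition would violate functionality) to obtain an effective list of description modes, then prepend a self-delimiting index (the paper uses the concrete code $0^n1$) to build the universal mode and read off the $O(1)$ bound. The only phrasing to tighten is ``extensions \ldots\ that have appeared so far'': the monotonicity closure must add $(p,x',y)$ for \emph{all} extensions $x'$ of $x$ (enumerated in parallel), not merely some finite subset---but this is a wording slip, not a gap in the argument.
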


\begin{proof}
As usual, we first note that description modes can be effectively enumerated. This enumeration is obtained as follows. We generate all enumerable sets of triples and then modify them in such a way that the modified set becomes a description mode and is left unchanged if it already were a description mode. Namely, when a triple $(p,x,y)$ appears in the enumeration, we add this triple and all triples $(p,x',y)$ for all extensions $x'$ of $x$, unless the second condition is violated after that; in the latter case we ignore $(p,x,y)$.  

Let $U_n$ be the $n$th set in this enumeration. The optimal set $U$ can be constructed~as
$$
 U=\{ (0^n1p,x,y) \colon (p,x,y)\in U_n\};
$$
the standard argument shows that $\KS_U \le \KS_{U_n}\!+\,n+1$ as required.
\end{proof}

\begin{definition}\label{def:plain-version}
Fix some optimal description mode $D$ provided by Proposition~\ref{prop:plain-version}. The function $\KS_D(y\cnd x*)$ is denoted by $\KS(y\cnd x*)$  and called the (plain) \emph{monotone-conditional complexity of $y$ given $x$}, or the (plain) \emph{conditional complexity of $y$ given $x$ as a prefix}.
\end{definition}

If we omit the third requirement for description modes, we get the standard conditional complexity $\KS(y\cnd x)$ in the same way. The notation we use (placing $*$ after the condition) follows~\cite{chernov-et-al} though a different version of monotone-conditional complexity is considered there. In general, $\KS(y\cnd x*)$ is greater than the standard conditional complexity $\KS(y\cnd x)$ since we have more requirements for the description modes. One may say also that the condition now is weaker than in $\KS(y\cnd x)$ since we do not know where $x$ terminates. It is easy to show that the difference is bounded by $O(\log |x|)$, since we need at most $O(\log |x|)$ bits to specify how many bits should be read in the condition~$x$. Difference of this order is possible: for example, $\KS(n\cnd 0^n)=O(1)$, but $\KS(n\cnd 0^n*)=\KS(n)+O(1)$ (the condition $0^n$ is a prefix of a computable sequence $000\!\ldots$, so it does not help).

The following simple result shows that the plain stopping time complexity (Definition~\ref{def:plain-stopping}) is a special case of this definition when $x=y$ (so we do not need a separate notation for the stopping time complexity).

\begin{theorem}\label{thm:plain-stopping}
The complexity $\KS(x\cnd x*)$ is equal \textup(up to $O(1)$ additive term\textup) to the plain stopping time complexity of~ $x$.
\end{theorem}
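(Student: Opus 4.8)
The plan is to prove the two inequalities $\KS(x\cnd x*) \le \textup{(stopping time complexity of }x) + O(1)$ and $\textup{(stopping time complexity of }x) \le \KS(x\cnd x*) + O(1)$ separately, in each case by converting the relevant optimal object (a machine, resp. a description mode) into the other kind of object in a way that changes complexity only by $O(1)$. By Theorem~\ref{thm:mach-prefix} we are free to work with the ``enumeration of a prefix-free set'' characterization of stopping time complexity instead of with Turing machines directly, and I expect that formulation to be the more convenient one.

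For the inequality $\KS(x\cnd x*) \le \textup{(stopping time complexity)} + O(1)$, suppose $p$ is a program of length $n$ that enumerates a prefix-free set $U_p$ containing~$x$. I would build a description mode $D$ as follows: put $(p, z, z) \in D$ whenever $z$ is enumerated into $U_p$, and then — to satisfy the monotonicity requirement — also put $(p, z', z) \in D$ for every extension $z'$ of~$z$. The key point is that the single-valuedness requirement is not violated: if $z_1, z_2 \in U_p$ and $z'$ extends both, then $z_1$ and $z_2$ are comparable prefixes of $z'$, contradicting prefix-freeness of $U_p$ unless $z_1 = z_2$. This $D$ is enumerable (uniformly in $p$), so $\KS(x\cnd x*) \le \KS_D(x \cnd x*) + O(1) \le |p| + O(1) = n + O(1)$, because $(p, x, x) \in D$ (take $z = z' = x$). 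Taking the minimal such $p$ gives the bound.

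For the reverse inequality, suppose $(p, x, x) \in D$ for the fixed optimal description mode $D$, with $|p| = \KS(x \cnd x*)$. From $p$ alone I want to enumerate a prefix-free set containing~$x$. Consider the set $V_p = \{ z : (p, z, z) \in D \text{ and } (p, z'', z) \notin D \text{ for every proper prefix } z'' \text{ of } z \}$ — i.e., the minimal (under the prefix order) strings $z$ for which $p$ describes $z$ with condition~$z$. This set is prefix-free by construction, and it is enumerable from $p$ (we enumerate $D$, and a string $z$ enters $V_p$ once we see $(p,z,z)\in D$ and have checked its finitely many proper prefixes $z''$ never produced $(p,z'',z)$; here we use that $D$ restricted to a fixed $p$ and to conditions that are prefixes of a fixed $z$ is a finite, and in fact monotone-in-the-condition, object so this is decidable in the limit in an enumerable way — some care is needed to phrase this as a genuine enumeration). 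Finally $x \in V_p$: we have $(p,x,x) \in D$; if some proper prefix $x''$ of $x$ had $(p, x'', x) \in D$ then by monotonicity $(p, x, x) \in D$ is already known, but the point is that $x''$ rather than $x$ would be the one entering $V_p$, and since $D$ is single-valued $(p, x'', y) \in D$ for a unique $y$, which forces $y = x$ and then $(p, x'', x'') \in D$ would put $x''$, not $x$, into $V_p$ — so I should instead define $V_p$ to be the prefix-minimal elements of $\{z : (p, z, z) \in D\}$ and argue directly that the prefix-minimal such string below-or-equal-to $x$ equals $x$ itself, using single-valuedness. Either way, $\KS(V_p\text{-enumerator}) \le |p| + O(1)$, and by Theorem~\ref{thm:mach-prefix} the stopping time complexity of $x$ is at most that.

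The main obstacle is the bookkeeping in the second direction: making sure the set one extracts from an optimal description mode is genuinely prefix-free, genuinely contains~$x$, and genuinely enumerable with only $O(1)$ overhead, given that a description mode may describe $x$ with condition $x$ via a $p$ that ``really'' acts on some proper prefix of $x$. The clean way around this is to note that for fixed $p$ the relation $\{(z, y) : (p, z, z) \in D,\ y = z\}$ together with the monotonicity axiom of $D$ essentially encodes a prefix-free ``stopping'' behavior already, so the extraction is almost tautological once stated correctly; I would spend most of the writeup pinning down that statement rather than on the two complexity-of-a-program estimates, which are routine applications of optimality.
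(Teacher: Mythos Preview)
Your overall approach matches the paper's: build a description mode out of prefix-free enumerations for one direction, and extract a prefix-free set from the optimal description mode for the other. The first direction is fine and is exactly what the paper does.

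In the second direction you are making your life harder than necessary. The set
\[
S_p = \{\, z : (p,z,z)\in D \,\}
\]
is \emph{already} prefix-free, so there is no need to pass to prefix-minimal elements or to worry about the ``$p$ really acts on a proper prefix of $x$'' scenario. The argument is two lines: if $(p,z,z)\in D$ and $(p,z',z')\in D$ with $z$ a prefix of $z'$, then monotonicity gives $(p,z',z)\in D$, and single-valuedness at $(p,z')$ forces $z=z'$. Hence $S_p$ is prefix-free, obviously enumerable from $p$, and contains $x$ whenever $(p,x,x)\in D$. This is precisely the paper's proof of that direction.

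Your worry was not idle in the abstract: the prefix-minimal elements of a merely enumerable set need not be enumerable (minimality is a $\Pi_1$ condition), and your first candidate $V_p$ with the clause ``$(p,z'',z)\notin D$ for all proper prefixes $z''$'' is likewise not visibly enumerable, and may even fail to contain $x$. So without the observation above there would be a genuine gap. Once you see that $S_p$ is itself prefix-free, all of that bookkeeping evaporates and the ``almost tautological'' extraction you anticipated in your last paragraph is exactly what remains.
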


\begin{proof}
Let $D$ be a description mode. Then for every $p$ we may consider the set $S_p$ of $x$ such that $(p,x,x)\in D$. This set is prefix-free: if $(p,x,x)$ and $(p,x',x')$ belong to $D$ and $x$ is a prefix of $x'$, then $(p,x',x)\in D$ according to the third condition, and then $x=x'$ according to the second condition. The algorithm enumerating $S_p$ can be constructed effectively if $p$ is known, so its complexity is bounded by the length of $p$ (plus $O(1)$, as usual). Choosing the shortest $p$ such that $(p,x,x)\in D$, we conclude that the minimal complexity of an algorithm enumerating a prefix-free set containing $x$ does not exceed $\KS(x\cnd x*)+O(1)$ 

Going in the other direction, consider an optimal decompressor $U(\cdot)$ that defines the (plain Kolmogorov) complexity of programs enumerating sets of strings. A standard trimming argument shows that we may modify $U$ in such a way that all algorithms $U(p)$ enumerate only prefix-free sets of strings (not changing the sets there were already prefix-free). Then consider a set $D$ of triples
$$
 (p,x,y)\in D \Leftrightarrow \text{$y$ is a prefix of $x$ and $y$ is enumerated by $U(p)$.}
$$
This set is obviously enumerable; the second requirement is satisfied since $D(p)$ enumerates a prefix-free set; the third requirement is true by construction, so $D$ is a description mode. If $p$ is the shortest description of a program that enumerates a set containing $x$, then $(p,x,x)\in D$, so $\KS_D(x\cnd x*)\le |p|$. Switching to the optimal desciption mode, we get a similar inequality with $O(1)$ additive term, as required.
\end{proof}

Another simple observation shows that indeed this complexity is the \emph{stopping time} complexity.

\begin{proposition}\label{prop:length}
 If $x$ has length $n$, then
$\KS(x\cnd x*)=\KS(n\cnd x*)+O(1).$
\end{proposition}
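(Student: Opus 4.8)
The plan is to prove the two inequalities $\KS(x\cnd x*)\le \KS(n\cnd x*)+O(1)$ and $\KS(n\cnd x*)\le \KS(x\cnd x*)+O(1)$ separately, in each case by taking a description mode that witnesses the right-hand side and transforming it into a description mode for the left-hand side without increasing description lengths. The underlying point is the same one that makes the stopping-time picture work: when $x$ sits in the condition as a prefix of an infinite sequence, the number $n=|x|$ and the string $x$ itself carry ``the same amount of information'' relative to that condition, since knowing $x$ as a terminated string is equivalent to knowing where the prefix $x$ stops, i.e.\ to knowing $n$.

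For the direction $\KS(x\cnd x*)\le\KS(n\cnd x*)+O(1)$: let $D$ be an optimal description mode for $\KS(\cdot\cnd\cdot*)$, and define a new set $D'$ by putting $(p,x,y)\in D'$ exactly when there is some prefix $z$ of $x$ with $(p,z,|z|)\in D$ and $y=z$. I would check the three requirements. Enumerability is clear. For the functionality (at most one $y$ per $(p,x)$): if $z,z'$ are both prefixes of $x$, one is a prefix of the other, say $z$ a prefix of $z'$; then by monotonicity of $D$ in the condition $(p,z',|z|)\in D$, and since also $(p,z',|z'|)\in D$, the second requirement for $D$ forces $|z|=|z'|$, hence $z=z'$. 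Monotonicity of $D'$ in the condition is immediate since the defining property only refers to prefixes of $x$. So $D'$ is a description mode. Now if $p$ is a shortest description of $n=|x|$ given $x*$ with respect to $D$, then $(p,x,|x|)\in D$ and hence $(p,x,x)\in D'$, so $\KS_{D'}(x\cnd x*)\le|p|=\KS_D(n\cnd x*)$; switching $D'$ to the optimal mode costs only $O(1)$.

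For the reverse direction $\KS(n\cnd x*)\le\KS(x\cnd x*)+O(1)$: again let $D$ be optimal, and define $D''$ by $(p,x,m)\in D''$ iff $x$ has a prefix $z$ of length $m$ with $(p,z,z)\in D$. Enumerability is clear; for functionality, if $z,z'$ are prefixes of $x$ with $(p,z,z),(p,z',z')\in D$ then as above monotonicity plus the second requirement for $D$ give $z=z'$, so $m=|z|$ is determined; monotonicity in the condition is again immediate. If $p$ is a shortest description of $x$ given $x*$ with respect to $D$, then $(p,x,x)\in D$, so $(p,x,|x|)\in D''$ and $\KS_{D''}(|x|\cnd x*)\le|p|$; pass to the optimal mode at cost $O(1)$.

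I do not expect a serious obstacle here; the only point that needs a little care is the verification of the second requirement (uniqueness of the decompressed object) in each constructed mode, which is exactly where the monotonicity-in-the-condition axiom of $D$ is used to collapse the ambiguity between different prefixes of $x$. Everything else is bookkeeping, and the two constructions are essentially mirror images of each other, so it is natural to present them together. One might also remark that this proposition, combined with Theorem~\ref{thm:plain-stopping}, re-derives the intuitive fact that deciding ``when to stop'' after reading $x$ is the same task as specifying the length $n$ of $x$ given $x$ as an unterminated prefix.
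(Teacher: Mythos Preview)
Your proof is correct and follows essentially the same approach as the paper: prove each inequality by transforming the optimal description mode $D$ into a new description mode without increasing description lengths. The specific modes you build differ slightly from the paper's---the paper simply replaces the output $x$ by $|x|$ for one direction and, for the other, reads the decoded $n$ and outputs the $n$-bit prefix of the condition---whereas you work through prefixes $z$ of the condition satisfying the ``diagonal'' conditions $(p,z,z)\in D$ or $(p,z,|z|)\in D$; but this is a cosmetic variation and your verification of the uniqueness axiom via monotonicity is exactly the point that makes both versions go through.
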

\begin{proof}
If $D$ is the optimal description mode used to define $\KS(y\cnd x*)$, we may consider a new set 
  $D'=\{ (p,u,|x|) \colon (p,u,x)\in D\}$
that also is a description mode, and then note that $\KS_{D'}(|x|\cnd x*)\le \KS_D(x\cnd x*)$. For the other direction, we consider 
  $D'=\{(p,u,z)\colon \exists n\, [(p,u,n)\in D, |u|\ge n, \text{ and $z$ = ($n$-bit prefix of $u$)}] \}$.
\end{proof}

\begin{remark}
If $a_0a_1a_2\ldots$ is a computable sequence, then $$\KS(a_0\ldots a_{n-1}\cnd a_0\ldots a_{n-1}*)=\KS(n\cnd a_0\ldots a_{n-1}*)=\KS(n)$$
with $O(1)$-precision (the constant depends on the computable sequence, but not on $n$), so the stopping time complexity can be considered as a generalization of the plain complexity (of a natural number $n$).
\end{remark}

\subsection{Quantitative characterization}\label{subsec:quant}

There is a well known characterization (see, e.g., \cite[Section 1.1, Theorem 8]{us}) for plain complexity in terms of upper semicomputable functions that satisfy some properties. Recall that a function is called \emph{upper semicomputable} if it is a pointwise limit of a decreasing sequence of uniformly computable total functions. (Now we need this notion for integer-valued functions; in this case we may assume without loss of generality that these computable functions are also integer-valued; in general case one needs to consider rational-valued functions.) An equivalent definition of a semicomputable natural-valued function $S(x)$ requires the set $\{(n,x)\colon S(x)<n\}$ to be enumerable.

  Plain complexity function $\KS(x)$ is upper semicomputable; we know also that 
$$
\#\{ x\colon \KS(x)<n\} < 2^n \eqno(*)
$$
since there are less than $2^n$ programs of length less than $n$. The characterization that we mentioned says that there exist a minimal (up to $O(1)$ additive term) upper semicomputable function that satisfies~$(*)$, and it coincides with plain complexity function with $O(1)$-precision.

It turns out that this characterization can be generalized to plain stopping time complexity (though the proof becomes more involved). Consider upper semicomputable functions $S(x)$ on strings that have the following property: \emph{for each infinite binary sequence $\alpha$ and for each $n$ there exists less than $2^n$ prefixes $x$ of $\alpha$ such that $S(x)<n$}. The following statement is true (it appeared as Theorem 18 in the extended version of Vovk--Pavlovic's paper~\cite{vovk-pavlovic}).

\begin{theorem}\label{thm:cardinality}
There exist a minimal \textup(up to $O(1)$ additive term\textup) function in this class; it coincides with the plain stopping time complexity $\KS(x\cnd x*)$ with $O(1)$-precision.
\end{theorem}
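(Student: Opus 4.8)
The plan is to prove the two halves of the statement separately: first, that $x\mapsto\KS(x\cnd x*)$ itself belongs to the class under consideration; second, that every function $S$ in the class dominates $\KS(x\cnd x*)$ up to an additive constant. Together these make $\KS(x\cnd x*)$ a minimal element of the class (up to $O(1)$), which is exactly what is claimed.

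The first half is easy, and I would carry it out using the optimal description mode $D$ behind $\KS(\cdot\cnd\cdot*)$. Upper semicomputability is immediate: $\KS(x\cnd x*)<n$ holds exactly when some triple $(p,x,x)$ with $|p|<n$ occurs in the enumeration of $D$, so the set of such $(n,x)$ is enumerable. For the counting property, fix a sequence $\alpha$ and note that for a fixed $p$ at most one prefix $x$ of $\alpha$ can satisfy $(p,x,x)\in D$: if $(p,x,x)$ and $(p,x',x')$ both lie in $D$ and $x$ is a prefix of $x'$, then $(p,x',x)\in D$ by the monotonicity requirement, hence $x=x'$ by the functionality requirement. Thus sending each prefix $x$ of $\alpha$ with $\KS(x\cnd x*)<n$ to a shortest $p$ witnessing it is an injection into the $2^n-1$ strings of length $<n$, which gives the required bound.

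The second half is the real work. By Theorems~\ref{thm:mach-prefix} and~\ref{thm:plain-stopping} it is enough to produce, for every $x$, a program of complexity $\le S(x)+O(1)$ that enumerates a prefix-free set containing $x$. Since $S$ is upper semicomputable, the sets $T_m=\{z:S(z)\le m\}$ are enumerable uniformly in $m$, and the hypothesis on $S$ (applied with $n=m+1$) says that the prefixes of any sequence meet $T_m$ in at most $2^{m+1}-1$ points. I would, for each $m$ and using the enumeration of $T_m$, split $T_m$ online into prefix-free pieces $W_{m,1},W_{m,2},\dots$ by the \emph{first-fit} rule: when a new element $x$ of $T_m$ appears, put it into the piece with the least index that currently contains no string comparable with $x$. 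Every piece is then prefix-free by construction, and $W_{m,i}$ is enumerable uniformly in $(m,i)$. If one shows that this uses only indices $i<2^{m+1}$, then for $x$ with $S(x)=m$ one encodes the pair $(m,i)$ of the piece that swallowed $x$ by a string $p$ of length $m+1$ (the length records $m$, the bits record $i$); the program ``run first-fit on $T_m$ and output the piece-$i$ elements'' is computed from $p$, so its complexity is at most $|p|+O(1)=S(x)+O(1)$, and it enumerates a prefix-free set containing $x$. This is exactly what was needed.

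The step I expect to be the obstacle is the bound on the number of pieces, a purely combinatorial claim: if a set $T$ of strings meets the prefixes of every sequence in at most $M$ points, then first-fit (in any enumeration order) uses at most $M$ pieces. I would prove it by induction on $M$. Let $C$ be the final content of the first piece. An element of $T$ ends up outside the first piece precisely because, when it appeared, it was comparable with some string already in that piece, and that string never leaves; hence in the limit every element of $T\setminus C$ is comparable with some element of $C$, i.e. $C$ is a maximal antichain of $T$. The geometric heart is that a maximal antichain of $T$ meets every longest path through $T$: if $x_1\sqsubset\dots\sqsubset x_M$ are the $T$-prefixes of a sequence $\alpha$ and $C$ avoided all of them, choose $c\in C$ comparable with $x_M$; then $c\notin\{x_1,\dots,x_M\}$, and $c$ cannot be a prefix of $x_M$ (such a prefix would be a $T$-prefix of $\alpha$ too), so $c$ properly extends $x_M$, and any sequence extending $c$ has $x_1,\dots,x_M,c$ among its $T$-prefixes, contradicting the bound $M$. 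Consequently $T\setminus C$ meets the prefixes of every sequence in at most $M-1$ points; and since the elements placed outside the first piece are exactly those of $T\setminus C$ and first-fit treats them just as it would treat $T\setminus C$ in isolation (with pieces renumbered), the number of pieces for $T$ is $1$ plus that for $T\setminus C$, and the induction closes with the bound $M$. Applying this with $M\le 2^{m+1}-1$ finishes the second half, and hence the theorem.
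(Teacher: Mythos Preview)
Your proof is correct and follows the same overall architecture as the paper: show $\KS(x\cnd x*)$ lies in the class, then for an arbitrary $S$ in the class split each level set into prefix-free colour classes by first-fit and use the colour as a description. The paper phrases the combinatorial core as an online game (Lemma~\ref{lem:gleb}) and also uses first-fit as Bob's strategy.

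Where you genuinely diverge is in the verification that first-fit never needs more than $M$ colours. The paper argues structurally: it maintains, for every vertex $x$, that the set $T_x$ of colours used in the $x$-subtree is an initial segment of the complement of the set $P_x$ of colours used on the path to $x$, and deduces that $\#T_x$ equals the maximal chain length in the $x$-subtree (it also gives a second proof via ``marked ranks''). Your argument instead peels off the first colour class $C$, observes that $C$ is a maximal antichain of $T$, shows that any maximal antichain must hit every chain of length $M$ (otherwise one could extend past the top of such a chain to get $M{+}1$ comparable elements), and concludes by induction on $M$ since first-fit on $T\setminus C$ is exactly first-fit on the remaining pieces. Your route is shorter and avoids the bookkeeping of the $T_x/P_x$ invariants; the paper's route, on the other hand, gives finer local information (the exact colour count in every subtree), which is not needed here but is of independent interest.
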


\begin{proof}
The easy part is to show that $\KS(x\cnd x*)$ belongs to the class. It is upper semicomputable, since in general the function $\KS(x\cnd y*)$ is upper semicomputable (enumerating the set $D$ of triples, we get better and better upper bounds, finally reaching the limit value).

Let $\alpha$ be some infinite sequence. There are less than $2^n$ algorithms of complexity less than $n$ enumerating prefix-free sets, and each of this prefix-free sets may contain at most one prefix of $\alpha$. So the second condition is also true.

In the other direction we use some online (interactive) version of Dilworth theorem (saying that a partially ordered finite set where maximal chain is of length at most $k$ can be partitioned into $k$ antichains) where the set is a growing subset of the full binary tree and splitting into antichains should be performed at each stage (and cannot be changed later). The exact statement is as follows.

Consider a game with two players. Alice and Bob alternate. Alice may at each move (irreversibly) mark a vertex of a full binary tree. The restriction is that each infinite branch should contain at most $k$ marked vertices. Bob replies by assigning a color from $1,\ldots,k$ to the newly marked vertex. No vertices of the same color should be comparable (be on the same branch). The colors cannot be changed after they are assigned. Bob loses if he is unable to assign color at some stage (not violating the rules).
\begin{lemma}\label{lem:gleb}
Bob has a computable strategy that prevents him from losing.
\end{lemma}

\begin{proof}[Proof of Lemma~\ref{lem:gleb}]

This lemma can be proven in different ways. In the extended version of Vovk--Pavlovic's paper~\cite{vovk-pavlovic} the following simple strategy is suggested: Bob assigns the first available color. In other terms, for a new vertex $x$ Bob chooses the first color that is not used for any vertex comparable with $x$. One needs to check that $k$ colors are always enough. It is not immediately obvious, since more than $k$ vertices could be comparable with $x$ (being its descendants, for example). However, we may note that during the process:
\begin{itemize}
\item Colors of comparable vertices are different. (By construction.)
\item If a vertex $x$ gets color $i$, then each smaller color is used either for a predecessor of $x$ or for a descendant of $x$. (By construction.)
\item If $x$ is a vertex (colored or not),  $T_x$ is the set of colors used in the subtree rooted at $x$, and $P_x$ is the set of colors used on the path to $x$ (not including $x$), then $T_x$ and $P_x$ are disjoint and $T_x$ is the initial segment in the complement to $P_x$. (Indeed, the disjointness is mentioned above. If $y$ appears in $T_x$, then all smaller colors appear either below $y$ (therefore in $P_x$ or in $T_x$), or above $y$ (therefore in $T_x$).
\item The sets $T_{x0}$ and $T_{x1}$ for two brother vertices $x0$ and $x1$ are comparable with respect to inclusion. (Indeed, they are two initial segments of the same ordered set, the complement to $P_{x0}$ or $P_{x1}$; note that $P_{x0}=P_{x1}$.)
\item For each $x$ the total number of colors used in $T_x$ is the minimal possible, i.e., this number equals the maximal number of marked vertices on some path in $T_x$. (Induction using the previous property.)
\end{itemize}

The last property implies that Bob never uses more than $k$ colors, since by assumption the total number of marked vertices on one path is at most $k$.

There is a different description of the winning strategy for Bob (we provide it since it somehow explains why the previous argument works). At every stage, for each vertex $x$ we consider the \emph{marked rank} of $x$, the maximal number of marked vertices on some path (in $x$-subtree) that starts at $x$. By assumption we know that the marked rank of the root never exceeds $k$. Denoting the marked rank of $x$ by $r(x)$, we may write the recursive definition:
$$
r(x)=
\begin{cases}
\max(r(x0),r(x1)), \text{if $x$ is not marked};\\
\max(r(x0),r(x1))+1, \text{if $x$ is marked}.
\end{cases}
$$
(To complete this definition, we should add that $r(x)=0$ if the subtree rooted at $x$ has no marked vertices.)

On the other hand, for each vertex $x$ we consider the number of different colors used in the subtree rooted at $x$, and denote it by $c(x)$.  Let us denote by $C(x)$ the set of these colors, so $c(x)=\#C(x)$. We can write a similar recursive definition for $C(x)$:
$$
C(x)=
\begin{cases}
C(x0)\cup C(x1), \text{if $x$ is not marked};\\
C(x0)\cup C(x1)+(\text{the color of $x$)}, \text{if $x$ is marked}.
\end{cases}
$$
We use the sign $+$ in the last line, because the color of $x$ cannot be in $C(x0)$ or $C(x1)$ due to our requirements.

The game rules imply that $c(x)\ge r(x)$, since for every branch all the marked vertices on this branch should have different colors. Bob strategy is to maintain the invariant relation $c(x)=r(x)$, i.e., Bob uses the minimal possible number of colors for every subtree. We denote this invariant relation by~(I). If he manages to maintain it, he does not need more than $k$ colors, since by assumption $r(\Lambda)$ never exceeds $k$, where $\Lambda$ is the root (the empty string). But how can Bob maintain (I)?

Let us start with the following remark. Assume that (I)~holds. Then for every vertex $x$ the sets $C(x0)$ and $C(x1)$ are comparable (one of them is a subset of the other one). Indeed, in this case the recursive definition implies that
$$
\#(C(x0)\cup C(x1))=\max(\#C(x0),\#C(x1)).
$$

Assume that Alice has marked one more vertex, some vertex $u$. Then Bob should assign some color to this vertex. The choice of this color will be discussed later; let us see first where (I) may be violated.

\begin{itemize}
\item If $x$ is a descendant of $u$, then (I) remains true, since nothing is changed in the subtree rooted at $x$.
\item If $x$ is incomparable with $u$ (not a prefix and not an extension of $u$), then (I) remains true for the same reason.
\item For $x=u$ both $r(x)$ and $c(x)$ increase by $1$ after marking a vertex and assigning a color to it (Bob has to use a color that did not appear in $C(x)$), so (I) remains true.
\item So the only remaining case is when $x$ is a proper prefix of $u$ (so $u$ is not a root)
\end{itemize}

Let us consider this case in more detail: now it is important which color Bob uses, and we have to prove that he can choose the color in such a way that the invariant remains true.  The problem may appear if at some vertex $x$ (a proper prefix of $u$) the value of $r(x)$ does not change while the value of $c(x)$ changes (increases by $1$ because of the new color). 

We know that $r(u)$ increases by $1$. This increase propagates to the root due to recursive definition. Either it propagates all the way through (and then everything is OK), or the propagation stops at some vertex $v$. This means that we had
$$
r(v)=\max(r(v0),r(v1))\ [+ 1,\text{ if $v$ was marked]},
$$
and one of the two arguments of $\max(\cdot,\cdot)$ increased, but the maximum remained unchanged since the other argument was bigger anyway.

Assume that, say, $r(v0)$ increased ($u$ is in the left subtree of $v$) but $r(v0)=c(v0)$ was less than $r(v1)=c(v1)$, so the maximum did not increase. Then we had $C(v0)\subsetneq C(v1)$ (these sets are comparable and $C(v1)$ is bigger). Then Bob may use the color from $C(v1)\setminus C(v0)$ for the vertex $u$. If he does this, $C(v0)$ increases but remains a subset of $C(v1)$, so $C(v)$ remains unchanged and (I) remains true for $v$ (and for all ancestors of $v$ due to recursive definition).

Summarizing Bob's strategy: when Alice marks some vertex $u$, trace the path from $u$ to the root and look where the marked rank changes (due to the mark at $u$) and where it does not. If it changes all the way to the root (including the root), use whatever color you want. If $v$ is the first vertex where the marked rank remains the same, look at the subtrees rooted at $v0$ and $v1$ and use the color that appears in one of them but not in the other one.

This finishes an alternative proof of Lemma~\ref{lem:gleb}.
\vspace*{-1ex}
\end{proof}
\vspace*{1ex}

Now let us show how the lemma is used to finish the proof of Theorem~\ref{thm:cardinality}. Let $S$ be a function in the class; since $S$ is upper semicomputable, for each $n$ Alice may enumerate strings $x$ such that $S(x)<n$. We know that there is at most $2^n$ strings of this type along any branch of the tree, so Alice never violates the restriction for $k=2^n$. The lemma then says that Bob can assign $2^n$ colors (represented as $n$-bit strings) to all the vertices in such a way that compatible vertices (a string and its prefix) never get the same color. We run these games for all $n$ in parallel; if vertex $x$ gets color $c$, we put $x$ into an enumerable set indexed by $c$. The rules of the game guarantee that all these sets are prefix-free, and the algorithm enumerating $c$th set needs only $|c|$ bits of information. So, if $S(x)<n$, there exists an algorithm of complexity $n+O(1)$ that enumerates a prefix-free set containing $x$. This means that $\KS(x\cnd x*)\le S(x)+O(1)$ as required.
\end{proof}

\subsection{Oracles and the stoppping time complexity}

As every notion in the general computability theory, Kolmogorov complexity can be relativized. Let $X$ be an infinite binary sequence used as an oracle (all the computations get access to $X$ for free). Then we get a notion of \emph{relativized Kolmogorov complexity} $\KS^X(x)$ that can be considered as a function of two arguments, a binary string $x$ and an infinite binary sequence $X$, defined up to $O(1)$ additive term. (The constant in $O(1)$ does not depend on $X$ and $x$.) 

It is natural to compare the stopping time complexity $\KS(x\cnd x*)$ and the relativized complexity $\KS^X(x)$ where $X$ is some oracle (infinite binary sequence) that has $x$ as a prefix.

It is easy to see that 
$$
\KS^X(x)\le \KS(x\cnd x*)
$$ 
for every $X$ that has prefix $x$: an oracle access to entire sequence $X$ is more powerful than a bit-by-bit sequential access to $x$ without the right to read too much (beyond $x$). More formally, let $D$ be a set of triples $(p,x,y)$ used to define $\KS(y\cnd x*)$ (Definition~\ref{def:plain-version}). Then we say that $p$ is a description of $x$ with oracle $X$ (as the definition of $\KS^X(x)$ requires) if $(p,z,x)\in D$ for some $z$ that is a prefix of $X$. For a given $X$ every string $p$ can be a description of only one $x$, since $D$ is monotone. If $(p,x,x)\in D$ and $X$ is an extension of $x$, then $p$ is a description of $x$, and we get the required inequality.

The ``last exit before the bridge'' example shows that $\KS^X(x)$ can be much smaller than $\KS(x\cnd x*)$ for \emph{some} extensions $X$ of $x$: we have $\KS(0^n\cnd 0^n*)=\KS(n)+O(1)$, but $\KS^X(0^n)=O(1)$ for $X=0^n10^\infty$. So it is natural to take \emph{maximum} over all oracles $X$ that extend a given string $x$. Indeed this approach works:

\begin{theorem}\label{thm:stopping-oracle}
$$
\KS(x\cnd x*) = \max \{\KS^X(x)\colon \text{X is an infinite extension of x}\}+O(1).
$$
\end{theorem}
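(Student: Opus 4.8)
The plan is to prove the two inequalities separately; one of them is already available. For every infinite extension $X$ of $x$ we have $\KS^X(x)\le\KS(x\cnd x*)+O(1)$ — this is exactly the estimate proved just before the statement — and since moreover $\KS^X(x)\le\KS(x)+O(1)$ uniformly in $X$, the maximum over all such $X$ is finite and satisfies $\max_X\KS^X(x)\le\KS(x\cnd x*)+O(1)$. So the whole content of the theorem is the reverse inequality $\KS(x\cnd x*)\le\max_X\KS^X(x)+O(1)$, and for this I would route through the quantitative characterization of Theorem~\ref{thm:cardinality}.

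Fix the optimal oracle machine $U$ used to define relativized complexity and set $S(x):=\max\{\KS^X(x)\colon X\text{ an infinite extension of }x\}$. I claim $S$ belongs to the class of Theorem~\ref{thm:cardinality}; minimality of $\KS(\cdot\cnd\cdot*)$ in that class then immediately gives $\KS(x\cnd x*)\le S(x)+O(1)$, finishing the proof. Three points must be checked. (i)~$S$ is finite, since $\KS^X(x)\le\KS(x)+O(1)$ uniformly in $X$. (ii)~$S$ satisfies the cardinality restriction: given an infinite sequence $\alpha$ and an integer $n$, any prefix $x$ of $\alpha$ with $S(x)<n$ has in particular $\KS^\alpha(x)<n$ (taking $X=\alpha$ in the maximum), and there are fewer than $2^n$ strings $x$ with $\KS^\alpha(x)<n$, because there are fewer than $2^n$ programs of length $<n$ and each is a description of at most one string under $U^\alpha$. (iii)~$S$ is upper semicomputable, i.e.\ $\{(n,x)\colon S(x)<n\}$ is enumerable — this is the step that needs work.

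For (iii) I would use a compactness argument on the space of infinite extensions of $x$. Write $[z]$ for the set of infinite binary sequences extending a finite string $z$. If $S(x)<n$, then for each infinite extension $X$ of $x$ there is a program $p$ of length $<n$ with $U^X(p)=x$; that computation halts after querying finitely many oracle positions, so there is a finite prefix $z$ of $X$, with $z$ extending $x$, such that for every $Z$ extending $z$ the computation $U^Z(p)$ halts querying only positions below $|z|$ and outputs $x$. The sets $[z]$ obtained this way cover the compact set $[x]$, so finitely many of them do: there are programs $p_1,\dots,p_k$ of length $<n$ and strings $z_1,\dots,z_k$ extending $x$ with $[x]\subseteq[z_1]\cup\dots\cup[z_k]$ and, for each $i$, $U^{z_i}(p_i)=x$ with the run staying within $z_i$. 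Conversely, the existence of such a finite configuration implies $S(x)<n$: every infinite extension $X$ of $x$ lies in some $[z_i]$, hence $\KS^X(x)\le|p_i|<n$. The enumerator therefore dovetails over all $(n,x)$ and all finite lists $(p_1,z_1),\dots,(p_k,z_k)$ with $|p_i|<n$ and $z_i$ extending $x$; for each it simulates $U^{z_i}(p_i)$ (accepting only runs that halt without reading past $z_i$ and output $x$) and checks the inclusion $[x]\subseteq\bigcup_i[z_i]$, which is decidable because it suffices to test the finitely many extensions of $x$ of length $\max_i|z_i|$. It outputs $(n,x)$ as soon as some list passes all checks.

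I expect step (iii) to be the main obstacle, the delicate points being that oracle computations must be accepted only when they terminate without querying beyond the prefix on hand, and that the covering condition has to be phrased so as to be obviously decidable. A more elementary alternative would skip Theorem~\ref{thm:cardinality} and convert a finite cover as above directly into a program enumerating a prefix-free set containing $x$ (in the style of Theorem~\ref{thm:mach-prefix}); but squeezing the program length down to $\max_X\KS^X(x)+O(1)$ makes that route noticeably fussier, so going through the quantitative characterization looks cleaner.
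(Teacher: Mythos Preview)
Your proposal is correct and follows essentially the same route as the paper: both prove the nontrivial inequality by showing that $S(x)=\max_X\KS^X(x)$ lies in the class of Theorem~\ref{thm:cardinality}, checking the cardinality bound by specializing to $X=\alpha$ and upper semicomputability via a Cantor-space compactness argument. Your treatment of (iii) is in fact more explicit than the paper's (you spell out the finite-cover certificate and its decidability), but the underlying idea is identical.
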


\begin{proof}
As we have already mentioned, $\KS^X(x)\le \KS(x\cnd x*)+O(1)$ for every infinite extension $X$ of $x$. This shows that right hand side does not exceed the left hand side.

To prove the reverse inequality, we use the quantitative characterization of stopping time complexity (Theorem~\ref{thm:cardinality}). Let $S(x)$ be the value of the right hand side. It is enough to prove that $S$ is upper semicomputable and that $S(x)<n$ cannot happen for $2^n$ different prefixes $x$ of some infinite branch~$X$.

The second claim follows directly from the definition. Let $x_1,\ldots,x_k$ be some prefixes of an infinite sequence $X$ such that $S(x_i)<n$ for all $i=1,\ldots,k$. We need to show that $k<2^n$. Since $S(x_i)$ is defined as maximum and $X$ is an extension of $x_i$, we know that $\KS^X(x_i)<n$ for all $i$ and the same $X$. It remains to note that the number of different programs of length less than $n$ is smaller than $2^n$ (and the same programs with the same oracle give the same result).

To show that $S(x)$ is upper semicomputable, we use the standard compactness argument. As usual, it is enough to show that the binary relation $S(x)<n$ is (computably) enumerable. Indeed, for every $x$, the set $\{X\colon \KS^X(x)<n\}$ is the union, taken over all strings $p$ of length less than $n$, of the sets 
$$
\{X\colon \text{$p$ is a description of $x$ with oracle $X$}\}.
$$
Each of these sets is an open set in the Cantor space, since every terminating oracle computation uses only a finite part of the oracle, and the intervals in the Cantor space that form these sets can be effectively enumerated for all $p$ and $x$. The inequality $S(x)<n$ means that the union of these intervals for all $p$ of length less than $n$ covers the Cantor space. Now compactness guarantees that this happens already at some finite stage of the enumeration, so the property $S(x)<n$ is indeed enumerable.
\end{proof}

\section{Non-equivalence results}\label{sec:neg}

\subsection{Prefix-stable or prefix-free functions?}\label{sec:stable-free}

Looking at the characterization of $\KS(x\cnd x*)$ as the minimal complexity of a  program that enumerates a prefix-free set containing $x$ (Theorem~\ref{thm:plain-stopping}), one can ask whether a similar characterization works for the general case, i.e., whether $\KS(y\cnd x*)$ can be characterized as a minimal complexity of programs (machines) with some property. The answer is `yes', but we should be careful while choosing a property of programs used in this characterization. Here are the details.

\begin{definition}
A partial function $f$ defined on binary strings is called 
\begin{itemize}
\item \emph{prefix-free} if its domain is prefix-free (function is never defined on a string and its extension at the same time);
\item \emph{prefix-stable} if for every $x$, if $f(x)$ is defined, then $f$ is defined and has the same value on all (finite) extensions of $x$.
\end{itemize}
\end{definition}

It is easy to see that the definition of $\KS(y\cnd x*)$ can be reformulated in terms of prefix-stable functions: 
\begin{theorem}\label{thm:monot-plain-char}
The minimal plain complexity of a program that computes a prefix-stable function mapping $x$ to $y$ is equal to $\KS(y\cnd x*)+O(1)$.
\end{theorem}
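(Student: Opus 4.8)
The plan is to prove the two inequalities separately, both by passing through the description-mode machinery of Section~\ref{subsec:monot-cond}, with the optimal description mode $D$ fixed as in Definition~\ref{def:plain-version}.

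For the inequality stating that the minimal plain complexity of a prefix-stable program mapping $x$ to $y$ is at most $\KS(y\cnd x*)+O(1)$, I would start from a shortest description $p$ with $(p,x,y)\in D$ and build from $p$ a program $M_p$ for the following partial function: on input $z$, dovetail the enumeration of $D$ and search for a triple $(p,z',y')$ in which $z'$ is a prefix of $z$; output the first such $y'$ and halt. The point is that $M_p$ computes a \emph{prefix-stable} function. For a fixed input $z$, if $(p,z_0,y_0)$ and $(p,z_1,y_1)$ both belong to $D$ with $z_0,z_1$ prefixes of $z$, then $z_0$ and $z_1$ are comparable (being prefixes of the same string), say $z_0$ is a prefix of $z_1$; then monotonicity of $D$ gives $(p,z_1,y_0)\in D$ and single-valuedness gives $y_0=y_1$. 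Hence the output of $M_p$ on $z$ does not depend on the search order, and if $M_p$ is defined on $z$ via a triple $(p,z_0,y')$, then $z_0$ is also a prefix of every extension $z''$ of $z$, so $M_p(z'')$ is defined and equals $y'$. Since the map $p\mapsto M_p$ is computable, $\KS(M_p)\le|p|+O(1)=\KS(y\cnd x*)+O(1)$, and $M_p(x)=y$ because $(p,x,y)\in D$; minimizing over prefix-stable programs gives the claimed bound.

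For the reverse inequality, given any prefix-stable partial function $f$ computed by a program $M$ with $f(x)=y$, I would build a description mode $D'$ as follows: letting $U$ be the optimal machine defining plain complexity of strings, put $(q,x,y)\in D'$ when $U(q)$ halts with some output $M'$ and $\varphi_{M'}$ is defined on some prefix $z$ of $x$ with $\varphi_{M'}(z)=y$, applying the standard trimming trick (discard a newly generated triple if a triple with the same first two coordinates and a different third one has already been generated) to enforce single-valuedness. Enumerability and monotonicity in the condition are immediate. For our $M$, take $q$ to be a shortest $U$-program for $M$, so $|q|=\KS(M)$; because $f$ is prefix-stable, all triples with first coordinate $q$ are mutually consistent (if $\varphi_M(z)=y$ and $\varphi_M(z')=y'$ with $z,z'$ prefixes of $x$ and $z$ a prefix of $z'$, prefix-stability forces $\varphi_M(z')=y$), so nothing with first coordinate $q$ is trimmed and $(q,x,f(x))\in D'$. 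Therefore $\KS_{D'}(y\cnd x*)\le|q|=\KS(M)$, and switching to the optimal description mode yields $\KS(y\cnd x*)\le\KS(M)+O(1)$; minimizing over all such $M$ finishes the proof.

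The main thing to get right is the prefix-stability of $M_p$ in the first direction: it is the place where both defining properties of a description mode (monotonicity in the condition and single-valuedness) are used, precisely to make the output of $M_p$ on a string independent of the enumeration order and stable under passing to extensions. The trimming construction in the second direction is routine and completely parallel to the arguments already used in the proofs of Proposition~\ref{prop:plain-version} and Theorem~\ref{thm:plain-stopping}.
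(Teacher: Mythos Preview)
Your approach matches the paper's: both directions go through description modes, and your first direction is simply an explicit unpacking of the paper's one-line observation that fixing the first coordinate of a description mode yields a prefix-stable function of the condition.

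There is, however, a small slip in the second direction. The trimming you describe parenthetically---discard $(q,x,y)$ whenever some $(q,x,y')$ with $y'\ne y$ is already present---enforces single-valuedness but can break monotonicity in the condition. If $U(q)=M'$ is not prefix-stable, say $\varphi_{M'}(0)=a$ halts before $\varphi_{M'}(\Lambda)=b$ with $a\ne b$, then your procedure may accept $(q,0,a)$, later accept $(q,\Lambda,b)$ (no conflict at~$\Lambda$), yet discard $(q,0,b)$; now $(q,\Lambda,b)\in D'$ while $(q,0,b)\notin D'$, and $D'$ violates the third requirement for a description mode. So the claim that ``monotonicity in the condition is immediate'' does not hold for the trimmed set as you described it. The repair is easy: either use the trimming of Proposition~\ref{prop:plain-version} verbatim (add $(q,x,y)$ together with all $(q,x',y)$ for extensions $x'$ of $x$, and reject the whole batch if a conflict would arise), or---and this is what the paper actually does---trim the \emph{programs} rather than the triples: effectively convert each program $u$ to a prefix-stable version $\hat u$ and take $D'=\{(p,x,\widehat{U(p)}(x))\}$, which is automatically a description mode. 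With either fix your key observation, that nothing with first coordinate $q$ gets discarded when $U(q)$ is already prefix-stable, goes through unchanged, and the proof is complete.
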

\begin{proof}
A description mode can be considered as a family of prefix-stable functions (indexed by the first argument $p$). This shows that there exist a program for a prefix stable function mapping $x$ to $y$ of complexity at most $\KS(y\cnd x*)+O(1)$. On the other hand, one can efficiently ``trim'' all programs to make them prefix-stable; if $\hat u$ is the trimmed version of a program $u$ and $U$ is the decompressor used to define plain complexity of programs, then the set $D=\{(p,x,\widehat{U(p)}(x))\colon p,x\}$ satisfies the conditions and may be considered as a decompressor in the definition of $\KS(y\cnd x*)$. Using this decompressor, we get the reverse inequality.
\end{proof}

More interesting question: is a similar statement true for prefix-free functions instead of prefix-stable ones? As we mentioned above, Theorem~\ref{thm:plain-stopping} implies that this is the case when $x=y$. (We spoke about programs that stop at $x$, but we may assume without loss of generality that the output is also $x$.) But in the general case it is not true anymore. Let us make this statement more precise. A naive idea is to consider the minimal plain complexity of a program computing a prefix-free function mapping $x$ to $y$. But this quantity does not look reasonable: the complexity of empty string $\Lambda$ with condition $x$  defined in this way is unbounded (and is actually the stopping time complexity of the condition $x$). 

A more reasonable approach is to consider function $\KS'(y\cnd x*)$ defined as the minimal complexity of a prefix-free program that maps \emph{some prefix of $x$} to $y$. This approach still does not work, as the following result shows.
 
 \begin{theorem}\label{thm:stable-not-free}
 The inequality $\KS(y\cnd x*)\le \KS'(y\cnd x*)+c$ holds for some $c$ and for all $x,y$. The reverse inequality does not hold: there exist strings $x_i, y_i$ \textup(for $i=0,1,2,\ldots$\textup) such that $\KS(y_i\cnd x_i*)$ is bounded while $\KS'(y_i\cnd x_i*)$ is unbounded.
 \end{theorem}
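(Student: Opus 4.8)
\subsection*{Plan of proof}

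For the first inequality I would argue as follows. Suppose $q$ is a program for a prefix\hyh free function $g$ with $g(z)=y$ for some prefix $z$ of $x$. Build from $q$ a new program computing the function $f$ defined by: on input $w$, dovetail the computations of $g$ on all prefixes of $w$ and, as soon as one of them halts, output its value. Since $\mathrm{dom}(g)$ is prefix\hyh free and the prefixes of any fixed string form a chain, $g$ halts on \emph{at most one} prefix of $w$; hence whenever $f(w)$ is defined, say via the prefix $z\preceq w$, every extension $w'$ of $w$ still has $z$ as its unique $\mathrm{dom}(g)$\hyh prefix, so $f(w')=g(z)=f(w)$. Thus $f$ is prefix\hyh stable, it maps $x$ to $y$, and its program has complexity $\le \KS'(y\cnd x*)+O(1)$. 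Theorem~\ref{thm:monot-plain-char} then gives $\KS(y\cnd x*)\le \KS'(y\cnd x*)+c$.

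The separation is the real content. The plan is to build a single prefix\hyh stable partial computable $f$ of complexity $O(1)$, together with pairs $(x_n,y_n)$ with $f(x_n)=y_n$ (so $\KS(y_n\cnd x_n*)=O(1)$ for free), arranged so that no prefix\hyh free program maps a prefix of $x_n$ to $y_n$ for more than finitely many $n$. The key asymmetry: given the \emph{complete} finite string $x_n$, a prefix\hyh stable decoder may use the location $z_n\preceq x_n$ where it first becomes defined along the prefixes of $x_n$, and set $y_n$ to be determined from $z_n$; but a prefix\hyh free $g$ with $g(z)=y_n$ for $z\preceq x_n$ is forced either to recover $y_n$ from a \emph{proper} prefix of $x_n$ (too little information if $y_n$ genuinely depends on all of $z_n$) or to halt exactly at $z_n$ --- i.e.\ $z_n$, up to a bounded shortening, must lie in a c.e.\ prefix\hyh free set of bounded complexity. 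So the construction must arrange the ``commit prefixes'' $z_n$ to evade all such sets. This is a priority/diagonalization argument: requirement $R_j$ asks that the $j$\hyh th c.e.\ prefix\hyh free set contain only finitely many of the $z_n$; each worker $n$ starts at a long seed string, and whenever a higher\hyh priority adversary's set enumerates worker $n$'s current commit prefix we move that prefix up to its immediate predecessor (legal, since a prefix\hyh free set containing a string contains no prefix of it), keeping the value on the whole cone equal to $y_n$. Each adversary can push a given worker at most once, so every worker is moved finitely often and settles at some $z_n$; the values $y_n$ are pinned at the seed stage, which keeps $f$ computable and prefix\hyh stable. Together with the observation that, for fixed $c$, there are only finitely many prefix\hyh free programs of complexity $\le c$ (each ``hitting'', up to bounded shortening, only finitely many $z_n$) and that the $y_n$ can be chosen of unbounded complexity (ruling out hard\hyh wiring), this forces $\KS'(y_n\cnd x_n*)>c$ for all large $n$, hence $\KS'(y_n\cnd x_n*)$ unbounded.

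The main obstacle is keeping the three demands on $f$ compatible while the commit prefixes migrate: prefix\hyh stability forces the value to be constant on each cone and never revised (so it must be fixed early, at the seed, not at the settled position); computability forces $\mathrm{dom}(f)$ to stay c.e.\ throughout the migrations; and global consistency forces the workers' cones to remain pairwise incomparable, so that no two cones ever become nested or acquire conflicting values. Making this work requires placing the seeds deep enough, and bounding the number of moves per worker, so that migrations never collide and never expose the index $n$ at a short prefix of $x_n$, while the settled commit prefixes still look ``random'' to every bounded\hyh complexity prefix\hyh free machine --- including ones that read a short prefix of $x_n$ and then try to exploit the c.e.\ enumeration of the construction itself. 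This bookkeeping, rather than any single clever trick, is where the difficulty lies.
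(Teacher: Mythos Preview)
Your argument for the first inequality is correct and matches the paper's one-line reduction.

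For the separation, however, there is a genuine gap. You correctly identify that one must build a single prefix\hyh stable $f$ that defeats every finite collection of prefix\hyh free competitors $G_0,\ldots,G_{i-1}$, and your ``move the commit prefix to its predecessor'' response handles one of the two cases cleanly: if $G_j$ halts \emph{exactly at} the current $z_n$ with value $y_n$, then shortening $z_n$ by one bit takes $z_n$ (and hence all its prefixes) out of $\mathrm{dom}(G_j)$ forever, since $G_j$ is prefix\hyh free.

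The case you do not handle is when $G_j$ halts at a \emph{proper} prefix $p\prec z_n$ with $G_j(p)=y_n$. You dismiss this with ``too little information if $y_n$ genuinely depends on all of $z_n$'', but that is not a valid objection: your construction of $f$ is c.e., so a fixed adversary of complexity $O(1)$ can read just enough of its input to determine which worker's cone it lies in (the cones are pairwise incomparable by design, hence recognizable from finite prefixes), then watch your enumeration until that worker announces its value $y_n$, and output $y_n$ right there. Such a $G_j$ halts at a short prefix of every $x_n$ with the correct value, never touches any current $z_n$ itself, so your trigger never fires --- and it beats every worker simultaneously. Choosing the $y_n$ to have high complexity does nothing here; the adversary does not recover $y_n$ from $p$ information\hyh theoretically, it simply waits for you to reveal it. Your final paragraph flags exactly this adversary but labels the issue ``bookkeeping''; it is not --- this is the heart of the difficulty, and the ``move up one step'' mechanism cannot address it, because moving $z_n$ to its predecessor still leaves $p$ as a prefix.

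The paper's proof resolves this with a genuinely different response in that case. When some $G_j$ places the matching label at a proper prefix $p$ of the current target vertex, one does \emph{not} shorten the target along the same branch. Instead one observes that $p$ is also a prefix of the \emph{sibling} subtree at the next level, so $G_j$ (being prefix\hyh free) is permanently silent in that entire subtree; one then restarts there with a \emph{fresh} label against the remaining $i-1$ opponents. This gives an inductive rather than linear structure: defeating $i$ opponents reduces to defeating $i-1$ opponents in a protected subtree. Your proposal is missing this branching/restart step, and without it the argument does not go through.
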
 

\begin{proof}
The first part is easy: if an algorithm computing a prefix-free function $f$ is given, we can effectively transform it into an algorithm that computes its prefix-stable extension $g$ such that$g(x)=y$ if $f(u)=y$ for some prefix $u$ of $x$.

For the second statement we need to construct a prefix-stable function $F$ that, informally speaking, beats any finite number of prefix-free functions. Let us explain what does it mean. Consider a uniformly computable sequence of all prefix-free functions $G_0,G_1, \ldots$. We need a prefix-stable function $F$ with the following property: \emph{for every $i$ there exist some $x$ and $y$ such that $F(x)=y$ but there is no $j<i$ and no prefix $x'$ of $x$ such that $G_j(x')=y$}. Then we let $x_i$ and $y_i$ be these strings. Since $F(x_i)=y_i$, we know that $\KS(y_i\cnd x_i*)$ is bounded (by complexity of $F$ plus $O(1)$). On the other hand, $\KS'(y_i\cnd x_i*)\to \infty$ as $i\to\infty$ since all programs of bounded complexity appear among $G_0,\ldots,G_{i-1}$ for large enough~$i$.

We define function $F$ step by step, by adding labels to the vertices of the full binary tree. When label $y$ (a binary string) is placed at vertex $x$ (also a binary string), this means that we let $F(x)=y$ and also $F(x')=y$ for all $x'$ that are extensions of $x$ (recall that $F$ should be prefix-stable). There is only one restriction: if $x_1$ and $x_2$ are compatible strings (one is a prefix of the other), and both have labels, these labels should be the same.

We construct $F$ competing with the opponents, as it is often done in algorithmic information theory (see~\cite{game-argument}). There are countably many opponents; $i$th opponent is responsible for $G_i$. We say that she places a label $y$ of color $i$ at vertex $x$ if $G_i(x)$ turns out to be equal to $y$. Note that the opponents' labels carry two types of information: string $y$ and color $i$. Since $G_i$ is prefix-free, $i$th opponent never places her labels at two compatible vertices, so labels on a string and its prefix should never have the same color.

Labels (both placed by us and the opponents) are non-removable. A vertex can have several labels of different colors (corresponding to different opponents) and also our label. The winning condition is formulated for the limit configuration that involves all labels placed during the (infinite) game. We say that opponents $G_0,\ldots,G_{i-1}$ beat us (as a team) if for every label $y$ at vertex $x$ placed by us, there exists $j<i$ and label $y$ of $j$th color placed on $x$ or on some prefix of $x$. If this is not the case, i.e., there exist some label $y$ placed by us at some vertex $x$ such that first $i$ opponents never place label $y$ on $x$ and its prefixes, then we beat first $i$ opponents (as a team). Our goal is to beat all teams (for all $i$).

To achieve this goal, we split the tree into countably many trees $T_i$ as shown (Fig.~\ref{fig:split}); the subtree $T_i$ is used to beat the team $(G_0,\ldots,G_{i-1})$. 
\begin{figure}[h]
\begin{center}
\begin{tikzpicture}[xscale=0.8,yscale=0.8]
\foreach \x/\y in {0/0,-1/1,1/1,0/2,2/2,1/3,3/3}
  \draw[fill] (\x,\y) circle [radius=2pt];
\foreach \x/\y in {3.2/3.2,3.4/3.4,3.6/3.6}
   \draw[fill] (\x,\y) circle [radius=1pt];
\draw (0,0)--(1,1)--(2,2)--(3,3);
\draw (1,1)--(0,2);
\draw (2,2)--(1,3);
\draw (0,0)--(-1,1);
\foreach \x/\y in {-1/1,0/2,1/3}
   \draw [fill=lightgray] (\x,\y)--(\x+0.5,\y+2)--(\x-0.5,\y+2)--cycle;   
 \draw (-1,2.5) node {$T_0$};
  \draw (0,3.5) node {$T_1$};
   \draw (1,4.5) node {$T_2$};
\end{tikzpicture}
\end{center}
\caption{Subtree $T_i$ is used to beat $(G_0,\ldots,G_{i-1})$.}\label{fig:split}
\end{figure}
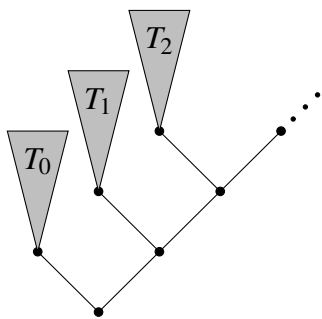
Our strategy considers the trees $T_i$ independently. In this way it is enough to show that we can beat $i$ opponents for any given~$i$. This is done inductively: When constructing a strategy for $T_i$ beating $i$ opponents, we assume that we already know how to beat any smaller number of opponents.

So let us explain the strategy on $T_i$. In this explanation we forget about other subtrees and explain a strategy that beats $i$ opponents on the entire tree. Fix some path in $T_i$, say, the path $1111\ldots$ (Figure~\ref{fig:inside-a-subtree}).
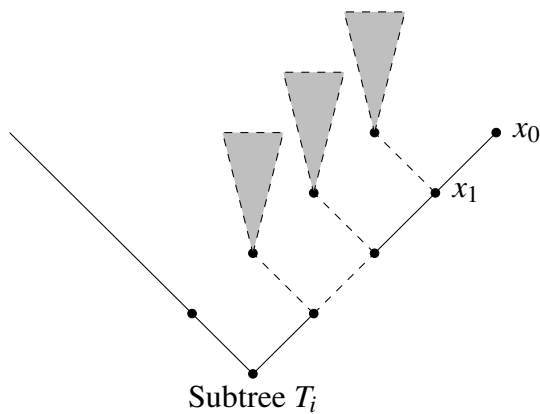
\begin{figure}[h]
\begin{center}
\begin{tikzpicture}[xscale=0.8,yscale=0.8]
\foreach \x/\y in {0/0,-1/1,1/1,0/2,2/2,1/3,3/3,4/4,2/4}
  \draw[fill] (\x,\y) circle [radius=2pt];
\draw (0,0)--(1,1);
\draw [dashed](1,1)--(2,2);
\draw (2,2)--(4,4);
\draw [dashed](3,3)--(2,4);
\draw [dashed](2,2)--(1,3);
\draw [dashed](1,1)--(0,2);
\draw (0,0)--(-4,4);
\foreach \x/\y in {2/4,1/3,0/2}
   \draw [fill=lightgray,dashed] (\x,\y)--(\x+0.5,\y+2)--(\x-0.5,\y+2)--cycle;   
 \draw (0,0) node [below] {Subtree $T_i$};
 \draw (4.1,4) node [right] {$x_0$};
 \draw (3.1,3) node [right] {$x_1$};

\end{tikzpicture}
\end{center}
\caption{Playing inside $T_i$: possible subtrees for playing against $i-1$ opponents.}\label{fig:inside-a-subtree}
\end{figure}
Choose a far enough vertex $x_0=1^m$ on this path\footnote{As we have said, we omit part of the path that is outside $T_i$.} (as we will see, we need $m>i$) and put some fresh (=not used before) label $y$ at vertex $x_0$. Wait until one of the $i$ opponents puts her label $y$ at $x_0$ or on some proper prefix $z$ of $x_0$. If this never happens, we win.\footnote{It may happen also that label $y$ is placed on the path to the root of $T_i$. For our purposes it is the same as if the label $y$ is placed at the root of $T_i$.} There are two possibilities.
\begin{itemize}
\item Some opponent (among the first $i$) places label $y$ at some proper prefix $z$ of $x_0=1^m$ in $T_i$. After that this opponent cannot place any label above $z$, in particular, in the subtree with root $1^{m-1}0$. Then we start to play in this subtree against the remaining $i-1$ opponents using fresh labels (so the opponent who placed $y$ at $z$ is useless for the opponent team). The winning strategy exists due to the inductive assumption.

\item Some opponent (among the first $i$) places label $y$ at $x_0$ itself. Then we place label $y$ at $x_1$ that is the father of $x_0$, i.e., at $x_1=1^{m-1}$, and wait again until some of the first $i$ opponents puts label $y$ at some prefix of $x_1$. If this prefix is a proper prefix of $x_1$, then we know what to do (see above). If the label $y$ is placed at $x_1$ itself, then we place label $y$ at $x_2$ that is the father of $x_1$ (i.e., $x_2=1^{m-2}$) at so on. Finally we either neutralize some opponent, or get labels $y$ at vertices $x_0,x_1,\ldots$, and each of them is replicated (at the same vertex) by one of the opponents, so we get a contradiction at $x_i$ (no more opponents are able to act).
\end{itemize}
   \vspace*{-3ex}
\end{proof}

Theorem~\ref{thm:stable-not-free} implies that the conjecture from~\cite[p.~254]{chernov-et-al} is false, and the function $C_T$ defined there may exceed $C_E$ more than by $O(1)$ additive term. We do not go into the details of the definition used in~\cite{chernov-et-al}; let us mention only that $C_E(y_i\cnd x_i)$ is bounded while $C_T(y_i\cnd x_i)$ is not: for every twice prefix machine (as defined in~\cite[p.~252]{chernov-et-al}) we get a prefix-free function if we fix the first argument (denoted there by $p$).

\subsection{Quantitative characterization of $\KS(y\cnd x*)$ works\\ only up to factor $2$}\label{subsec:quant-general}

In Section~\ref{subsec:quant} we provided a quantitative characterization of stopping time complexity, or $\KS(x\cnd x*)$, with $O(1)$-precision (Theorem~\ref{thm:cardinality}). The natural question is whether a similar characterization can be found in the general case, i.e., for  $\KS(x\cnd y*)$. 

For $\KS(x\cnd y)$ (the standard version of conditional complexity, with no monotonicity requirement) such a characterization is well known: $\KS(x\cnd y)$ is the minimal upper semicomputable function of two arguments $K(x,y)$ such that for every string $y$ and every number $n$ there is at most $2^n$ different strings $x$ such that $K(x,y)<n$.

The natural approach is to keep this restriction and add the monotonicity requirements:
   $$
K(x,y0)\le K(x,y)\ \text{and}\
K(x, y1)\le K(x, y), \ \text{for every $x$ and $y$}.
   $$
We get some class of functions (that are upper semicomputable, satisfy the cardinality restriction and are monotone in the sense described).  Can we characterize $\KS(x\cnd y*)$ as the minimal function in this class? No, as the following theorem shows.

\begin{theorem}\label{thm:no-quant-char}
\leavevmode\par
\textup{(\textbf{a})}~Function~$\KS(x\cnd y*)$ belongs to this class.

\textup{(\textbf{b})}~There exists a minimal \textup(up to $O(1)$ additive term\textup) function in this class;

\textup{(\textbf{c})}~Function~$\KS(x\cnd y*)$ is not minimal in this class: there exist a function $K$ in this class, and sequences of strings $x_n$ and $y_n$ such that $K(x_n,y_n)\le n$, but $\KS(x_n\cnd y_n)\ge 2n-c$ for some $c$ and for every $n$.

\textup{(\textbf{d})}~The factor $2$ that appears in the previous statement is optimal: if $K$ is a function in the class \textup(for example, the minimal one\textup), then $\KS(x\cnd y)\le 2K(x,y)+c$ for some $c$ and for all~$x$ and~$y$.
\end{theorem}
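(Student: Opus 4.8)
\textbf{Part (a).}The plan is to treat the four items essentially separately, since (a), (b), (d) are routine and (c) carries all the weight.

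\textbf{Parts (a) and (b).} For (a), I would recall from the proof of Theorem~\ref{thm:cardinality} that $\KS(x\cnd y*)$ is upper semicomputable in both arguments, and that the monotonicity inequalities $\KS(x\cnd y0*)\le\KS(x\cnd y*)$ and $\KS(x\cnd y1*)\le\KS(x\cnd y*)$ hold by definition of a description mode (extending the condition only helps). The cardinality restriction is exactly the counting argument already used for Theorem~\ref{thm:cardinality}: for fixed $y$ and $n$, a string $x$ with $\KS(x\cnd y*)<n$ is produced by one of fewer than $2^n$ descriptions $p$ of length $<n$ with condition $y$, and each $(p,y,\cdot)$ determines at most one $x$. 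For (b), I would run the standard optimality/minimal-function argument: one can enumerate all upper semicomputable functions obeying the cardinality bound, repair each so the cardinality bound and the two monotonicity inequalities hold literally (cap values and propagate decreases along the tree toward the root without breaking the count — the same kind of ``trimming'' used in Theorem~\ref{thm:cardinality} and Theorem~\ref{thm:monot-plain-char}), and then take the weighted minimum $K(x,y)=\min_k (K_k(x,y)+k+1)$; this function is again in the class and is minimal up to $O(1)$.

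\textbf{Part (d).} Here I would argue that membership in the class forces, for each fixed $y$, an \emph{essentially prefix-free} family of sets $A_n(y)=\{x:K(x,y)<n\}$ along each branch: by the cardinality restriction and monotonicity, the function $y\mapsto(\text{set of }x\text{ with }K(x,y)<n)$ is an upper semicomputable, monotone (increasing in $y$) family with at most $2^n$ elements per branch. One can then invoke (the relativized-by-$y$ form of) Lemma~\ref{lem:gleb} to colour, for each $x$ with $K(x,y)<n$, the pair $(x,y)$ — thought of as marking the vertex $y$ in the condition-tree with the datum $x$ — using $n$-bit colours so that colours are never repeated on comparable conditions. That gives, for each colour $c\in\{0,1\}^n$, a prefix-free (in $y$) enumerable relation; knowing $c$ (that is, $n$ extra bits) plus $x$ itself, we recover which prefix of $y$ to use, hence $\KS(x\cnd y*)\le n+O(1)$ whenever $K(x,y)<n$ — so $\KS(x\cnd y*)\le K(x,y)+O(1)$. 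Wait: that would contradict (c), so the colouring must cost more. The correct bookkeeping is that the colour $c$ does \emph{not} suffice to locate the right prefix of $y$ when $x$ and $y$ differ; one must also describe $x$ within the colour class, which by the cardinality bound has at most $2^n$ members per branch, costing another $n$ bits. Hence $\KS(x\cnd y*)\le 2n+O(1)$ when $K(x,y)<n$, i.e.\ $\KS(x\cnd y)\le 2K(x,y)+O(1)$ (and $\KS(x\cnd y)\le\KS(x\cnd y*)$ trivially).

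\textbf{Part (c) — the main obstacle.} This is where a genuine construction is needed. I would build, by diagonalization against the uniformly enumerated list of ``candidate description modes'' $D_0,D_1,\dots$ (equivalently, against all prefix-stable families, by Theorem~\ref{thm:monot-plain-char}), an upper semicomputable $K$ in the class together with pairs $(x_n,y_n)$ such that $K(x_n,y_n)\le n$ while every candidate of index $<$ some bound fails to describe $x_n$ from $y_n$ using fewer than $2n-c$ bits. The geometry: to get $K$-value $n$ I am allowed to put $x_n$ into $2^n$-per-branch many ``slots'' hanging off the condition-tree; but to force any prefix-stable description to cost nearly $2n$, I want the information ``which prefix of $y_n$ carries $x_n$'' to itself require about $n$ bits \emph{and} to be uncorrelated with the identity of $x_n$ among the $2^n$ objects sharing a branch. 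Concretely I would, for each target level $n$, pick a long condition-string, sprinkle $2^n$ incomparable conditions $y$ along a branch each of which is assigned a \emph{distinct} object $x$ (this is legal: they are incomparable, so monotonicity imposes nothing, and the per-branch count is met since only one of these incomparable $y$'s lies on any given branch — so actually I need them \emph{comparable} to make the count bite). Reconciling these two pressures is exactly the delicate point, and it mirrors the tree-surgery in Theorem~\ref{thm:stable-not-free}: I expect to nest $2^n$ objects down a single branch at conditions $y^{(1)}\prec y^{(2)}\prec\cdots\prec y^{(2^n)}$, setting $K(x^{(j)},y^{(i)})=n$ for $i\ge j$ only (using monotonicity to propagate upward), so that a prefix-stable describer must, given $y^{(i)}$, both decide ``how far down'' (choosing among $\approx\log(2^n)=n$ possibilities, i.e.\ $n$ bits) and which object (another $n$ bits), and a counting/game argument shows it cannot beat $2n-O(1)$ for all $n$ simultaneously. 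The hard part is verifying that this scheme keeps $K$ in the class globally (upper semicomputability plus the exact $2^n$-per-branch bound across all levels at once) while the lower bound $\KS(x_n\cnd y_n)\ge 2n-c$ really survives the diagonalization against \emph{all} description modes — this is the step I would budget the most care for.
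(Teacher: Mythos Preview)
Your (a) and (b) are correct and essentially what the paper does.

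For \textbf{(c)} the single-branch construction you sketch does not force a $2n$ lower bound. If you place objects $x^{(1)},\ldots,x^{(2^n)}$ along $y^{(1)}\prec\cdots\prec y^{(2^n)}$ with $K(x^{(j)},y^{(i)})\le n$ for $i\ge j$, the opponent simply assigns the $n$-bit string encoding $j$ to $x^{(j)}$ at $y^{(j)}$; this is prefix-stable and costs $n$ bits, not $2n$. Your ``pay for how-far-down plus pay for which-object'' heuristic fails because on a single branch these are the same datum. The paper's construction is essentially different and exploits the \emph{width} of the tree. One spreads fresh objects over more than $2^{2n}$ incomparable vertices, waits for short descriptions to be allocated, and uses pigeonhole to find two vertices where the \emph{same} description $p$ serves distinct objects $z,z'$; declaring $z$ simple at the root then forces a \emph{new} description for $z$ there (reusing $p$ would collide with $z'$), so $z$ now carries two descriptions. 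Iterating this in nested levels (Lemma~\ref{lem:misha-modified}) piles $2^{n-2}$ descriptions onto one object while spending at most $2^{n-2}$ declarations per vertex; invoking the lemma $2^{n-2}$ times in a shrinking subtree then exhausts the supply of descriptions of length $2n-O(1)$. This pigeonhole-across-branches mechanism is the idea your sketch lacks.

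For \textbf{(d)} your appeal to Lemma~\ref{lem:gleb} breaks for a reason your self-correction does not identify. The obstacle is not that a colour fails to name $x$; it is that declarations may \emph{migrate toward the root}: after you colour $x$ with $c$ at $y$, the enumeration of $K$ can later reveal $K(x,y'')<n$ for an ancestor $y''\prec y$, and prefix-stability forces colour $c$ at $y''$, where $c$ may already be held by another object. Lemma~\ref{lem:gleb} handles marks that only \emph{appear}, not marks that must be extended downward carrying a previously fixed colour; indeed, if this worked with $2^n$ colours you would get $\KS(x\cnd y*)\le K(x,y)+O(1)$, contradicting (c), and your ``add another $n$ bits'' patch is not an argument that $2^{2n}$ colours suffice either. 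The paper instead builds the description mode via a stack of $O(2^n)$ \emph{layers}, each with its own pool of $O(2^n)$ descriptions; a layer serves a request ``describe $x$ at $y$'' only if it can maintain a bijection between served objects and descriptions throughout the $y$-subtree, and otherwise forwards the request upward. A back-tracing argument (following Lemma~\ref{lem:misha-algorithm}) shows no request is rejected by more than $2^n$ layers, giving $O(2^{2n})$ descriptions in total and hence $\KS(x\cnd y*)\le 2K(x,y)+O(1)$.
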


\begin{proof}
The statements (a) and (b) are ``good news'', while the statement (c) is ``bad news'' showing that our characterization does not work. (May be, one can get a natural characterization by adding some other restrictions, but it is quite unclear what kind of restrictions could help here.) Finally, the statement (d) partly saves the situation and shows that the minimal function in the class and $\KS(x\cnd y*)$ differ at most by factor~$2$.

The statement (a) is obvious; note that $\KS(x\cnd y*)$ is bigger than $\KS(x\cnd y)$, so the cardinality restriction remains true. Other requirements immediately follow from the definition.

The statement (b) can be proved in a standard way. We can enumerate all functions in the class and get a uniformly computable sequence of functions $K_m(x,y)$. For that we enumerate all monotone upper semicomputable functions and then ``trim'' them by deleting small values that make the cardinality restriction false. Then we construct the minimal function $K(x,y)$ by letting
$$
 K(x,y)=\min_m K_m(x,y)+m+1.
$$
It is upper semicomputable and monotone; for every $y$, the set of $x$ such that $K(x,y)<n$ is the union of sets $\{x\colon K_m(x,y)<n-m-1\}$ that have cardinality at most $2^{n-m-1}$, and $2^{n-1}+2^{n-2}+\ldots < 2^n$. The function $K$ is minimal, since $K\le K_m+m+1$.

To prove~(c), we need to show that $\KS(x\cnd y*)$ is \emph{not} minimal in the class. We have to construct a function $K(x,y)$ in the class that is smaller than $\KS(x\cnd y*)$. This function will be constructed in the following way. We will make \emph{declarations} of the form ``$K(x,y)\le n$'' for some pairs $(x,y)$ of strings. Each of them implicitly contains declarations ``$K(x,y')\le n$'' for all $y'$ that are descendants (extensions) of $y$. We agree in advance that at most $2^{n-1}$ declarations of this form can be made for each given $y$ (including implicit declarations). Then function $K(x,y)$ is defined as the minimal upper bound declared explicitly or implicitly (for given~$x$ and $y$). For a given $y$, the total number of declarations with upper bound not exceeding $n$ is at most $2^{n-1}+2^{n-2}+\ldots <2^n$, so the function $K$ constructed in this way belongs to our class (we assume that the sequence of declarations is computable; this guarantees that $K$ is upper semicomputable). Note also that the declarations indeed guarantee that the declared inequality is true, if the function $K$ is defined as explained above.

Now we have to describe how $K$ is constructed (how the declarations are made). This is done independently (and in parallel) for each $n$, in some subtree dedicated to $n$. The goal: some declaration $K(x_n,y_n)\le n$ is made for some $x_n$ and $y_n$ such that $\KS(x_n\cnd y_n*)\ge 2n-O(1)$. So we approximate $\KS$ from above, keep track of the changes in these approximations and make declarations trying to beat these changes. (In terms of the game approach one can say that the opponent decreases the complexity and we play against these decreases.) Since we have $n$ fixed, we will not mention $n$ explicitly and read the declaration ``$K(x,y)\le n$'' as ``$x$ is declared simple at $y$'' (implicitly $x$ is declared simple at all descendants of $y$, too). For the changes in the approximations to $\KS(x\cnd y)$ we use a similar language: if $(p,y,x)$ appears in the set $D$ (used to define $\KS(\cdot\cnd\cdot)$), we say that ``description $p$ is allocated to $x$ at $y$'' (where $y$ is considered as a tree vertex). It implies that $p$ is allocated to $x$ in all descendants of $y$, too. Note that it is not possible that the same description is allocated to different objects at the same vertex (but different descriptions may be allocated to the same object).

The restrictions that we have to obey are that \emph{at most $2^{n-1}$ objects can be declared simple at any given vertex} (explicitly or implicitly). Our goal is to guarantee that \emph{some $x$ is declared simple at some vertex $y$, but no description of length $2n-O(1)$ is allocated to $x$ at $y$}. (Here $O(1)$ means some absolute constant that we will fix later; in fact, $6$ will work.)\footnote{Before giving the formal proof, let us say informally what makes the proof possible. Our declarations are ``more flexible'' compared to the actions of our opponent. We need to specify only which objects are simple at a given vertex. The opponent needs to assign specific descriptions to simple objects. These descriptions are inherited in the descendant vertices, and cannot be reused for other objects.} 

\begin{lemma}\label{lem:misha-modified}
By declaring simple at most $2^{n-2}$ objects at each vertex, it is possible either to achieve the goal, or reach a stage when for some vertex $y$ and all its descendants only one object $x$ is declared as simple, but at least $2^{n-2}$ descriptions of length at most $2n-O(1)$ are allocated to~$x$ at $y$.
\end{lemma}

Let us explain why this lemma is enough. We apply it and either achieve the goal, or get some vertex $y$ such that only one object $u$ is declared as simple at $y$ (and in $y$-subtree) but many (at least $2^{n-2}$) descriptions of length $2n-O(1)$ are allocated to~$u$ at~$y$. After that we start the same procedure (guaranteed by Lemma) at $y$-subtree using fresh objects (not~$u$). Note that the descriptions allocated to $u$ at $y$ cannot appear as descriptions of some other objects in $y$-subtree. In this way, using Lemma~\ref{lem:misha-modified} again, we declare at most $2^{n-2}$ simple objects (not counting $u$) at each vertex in $y$-subtree, so the total number of objects declared as simple does not exceed $2^{n-1}$. Lemma guarantees then that either we achieve the goal, or reach a stage where at some vertex $y'$ and its subtree there are two objects declared as simple ($u$ and the newly declared one), and for each of them at least $2^{n-2}$ descriptions of length at most $2n-O(1)$ are allocated at $y'$. Note that allocated descriptions for these two objects are different.

Then we apply Lemma~\ref{lem:misha-modified} third time at the corresponding subtree not using two objects already declared as simple, etc. Finally we may either achieve the goal, or declare up to $2^{n-2}$ objects as simple (at all stages), still obeying the $2^{n-1}$-restriction. For each simple object we have at least $2^{n-2}$ descriptions of length at most $2n-O(1)$, and this is a contradiction (we may use $6$ as a constant in $O(1)$).

\begin{proof}[Proof of the Lemma~\ref{lem:misha-modified}]
Let us first explain how we can achieve the goal or get a vertex~$y$ where only one object is declared as simple, but at least two descriptions of length $2n-O(1)$ are allocated to this object at $y$.

Take some level of a binary tree where we have more vertices than the number of descriptions of the size considered (level $2n$ is OK). At this level declare one simple object per vertex (all objects are different), and wait until a description of the right size is allocated to each of them. Then there are two different vertices $u$ and $u'$ where different objects $z$ and $z'$ are declared as simple, and the same description $p$ is allocated to both (it is OK to use the same description for different objects at different vertices). Then declare $z$ as simple at the root. After that we have only one object $z$ declared as simple at $u$, and two objects declared as simple elsewhere (one declared locally plus $z$). Some description should be allocated to $z$ in the root, and it cannot be $p$, because in this case $p$ would be allocated both to $z'$ and $z$ at $u'$. Therefore two descriptions are allocated to $z$ at $u$.

To amplify this argument and get more descriptions for one object, we use several layers. Consider all vertices of level $2n$ (as used before) and subtrees of height $2n$ rooted at all of them. In each of the subtrees we use the argument above (using disjoint sets of objects) and get an additional description for each subtree root. These descriptions cannot be all different, so there are some objects $z$ and $z'$ declared as simple in two vertices $u$ and $u'$ of height $2n$, and the same description $p$ is allocated to $z$ and $z'$ (at $u$ and $u'$ respectively). Then we declare $z$ as simple at the root of the entire tree, so some description should be allocated to $z$ at the root. It cannot coincide with the descriptions used for $z$ both on levels $2n$ and $4n$, since these descriptions are used for other objects. So we get three descriptions for $z$ at some vertex of level $4n$, and only $z$ is declared there as simple.

We may iterate the argument; the only problem is that the objects declared as simple propagate upwards, so the total number of objects declared as simple increases. So only $2^{n-2}$ iterations are possible, and this gives us the number of descriptions for one object stated in the lemma. Lemma~\ref{lem:misha-modified} is proven.
\end{proof}

This finishes the proof of part~(c). To prove~(d), for a given function $K$ in the class, we construct a description mode $D$ such that $\KS_D(x\cnd y*)\le 2K(x,y)+O(1)$. In fact, both (c) and (d) in fact deal with the same game but provide winning strategies for opposite players, since the game parameters are different. 

We enumerate the function $K(x,y)$ from above. Let us fix some $k$. When we discover that $K(x,y)<k$, we say that \emph{object $x$ is declared simple at vertex $y$}, considering $y$ as a vertex of a full binary tree. This automatically implies that $K(x,y')<k$ for all extensions $y'$ of $y$, so we may assume that when $x$ is declared simple at $y$, it is automatically declared simple at all vertices of $y$-subtree. For every vertex, there is at most $2^k$ objects declared as simple.

Observing this process, we need to construct the description mode $D$. This can be understood as follows: we assign \emph{descriptions} of length $2k+O(1)$ for some objects at some vertices. If description $p$ is assigned to $x$ at $y$, it is automatically assigned to $x$ at all $y'$ that are extensions of $y$. No description should be used for different objects at the same vertex (and, therefore, at a vertex and its extension). Our goal is to provide descriptions (at every vertex) for all objects that are declared simple at that vertex. If we succeed, then this construction can be applied in parallel for all $k$, and we get a description mode $D$ such that $\KS_D(x\cnd y*)\le 2K(x,y)+O(1)$.

It is convenient to denote $2^k$ by $n$. Then we know that at every vertex at most $n$ objects are declared simple, and need to provide descriptions from a pool of size $O(n^2)$ for all simple objects (where the hidden constant does not depend on $n$). How can we achieve this?

We perform the description assignment using several ``layers''. Each layer uses its own pool of descriptions of size $O(n)$. When a new object $x$ is declared as simple at some vertex~$y$, the corresponding request (``please provide a description for $x$ at $y$'') is sent to the first layer, where it is served or rejected. If rejected, the same request is redirected to the second layer, when again it is served or rejected (and redirected to the third layer), etc. We will show that $O(n)$ layers are enough (the requests will never go higher); in total we get $O(n^2)$ descriptions, as required.

All the layers follow that same algorithm of processing requests. The idea is to keep --- as much as possible --- a one-to-one correspondence between objects and descriptions allocated to them. Of course, there is no hope to maintain this correspondence in all situations, since at each layer we have only some maximal number $N=O(n)$ of descriptions, and there are $O(n)$ layers, while the number of objects is unbounded. 

The restricted version of this bijection requirement is as follows:
\begin{quote}
on every path in the tree there is a bijection between the objects served along the path and the descriptions allocated to these objects. Moreover, for every vertex $v$ there exists a bijection between objects served in the $v$-subtree and descriptions allocated to them, \emph{unless there are more than $N$ objects served in the $v$-subtree}.
\end{quote}
Note that:
\begin{itemize}
\item Only requests that reach the layer and are served at this layer are taken into account. Requests that are served by the previous layers, or rejected by our layer (and redirected to the higher layers) do not matter.
\item If a vertex is declared simple at some vertex $v$ and then served, then both the declaration and the description remain valid above $v$ (everywhere in the $v$-subtree).
\item The restriction guarantees that an object never has different descriptions at the same vertex (or at a vertex and its descendant); the same description also cannot be used for different objects at a vertex and its descendant. However, this may happen in two incomparable vertices (and only if there are more than $N$ objects served).
\item If more than $N$ objects are served in the $v$-subtree, then (of course) a bijection between them and descriptions is not possible (for cardinality reasons); the requirement says that this is the only case when the bijection does not exist.
\end{itemize}
Of course, there are easy ways to maintain this invariant relation: just reject all requests, or serve them until $N$ different objects appear and then reject all the subsequent requests. We will describe a better algorithm that serves more requests and guarantees that $O(n)$ layers are enough. Here is it.

We say that (at some stage) a vertex $v$ is \emph{regular} if at most $N$ objects are served in the $v$-subtree (and therefore there is a bijection between objects and descriptions in the $v$-subtree, according to the invariant relation). Otherwise, $v$ is \emph{overloaded}.  We say that object $x$ is \emph{acceptable} at vertex $v$ if $v$ is regular and may remain regular after $x$ is served in $v$ (or above $v$). In other words, $x$ is acceptable at $v$ in two cases: (a)~$v$-subtree has less than $N$ objects (in this case every object is acceptable at $v$); (b)~$v$-subtree has $N$ objects and $x$ is one of them. An observation: if $x$ is acceptable at $v$ and $v'$ is a descendant of $v$, then $x$ is acceptable at $v'$ (since in the $v'$-subtree we have less objects than in $v$-subtree, or the same objects with the same descriptions).

Now the algorithm: when a request to provide a description for an object $x$ at a vertex $v$ arrives, we check whether $x$ is acceptable at $v$. If not, the request is rejected. If yes, we go from $v$ to the root and take the last vertex $\overline v$ where $x$ is acceptable (may be, the root itself). Then we provide a description for $x$ based on the bijection that exists for the $\overline v$-subtree.

\begin{lemma}\label{lem:misha-algorithm}
This algorithm maintains the invariant relation.
\end{lemma}
\begin{proof}
Consider an arbitrary path in the tree. If it does not go through $\overline v$, nothing is changed along the path. If it goes through $\overline v$, the condition along the path remains true, since $\overline v$ was acceptable for $x$ and remains regular.

Now consider an arbitrary vertex $w$. If $w$ is a (proper) ancestor of $\overline v$, then $w$ is now overloaded (because $x$ was not acceptable at $w$). If $w$ is in the $\overline v$-subtree, then the $w$-subtree has a required bijection, since $\overline v$-subtree has it. Finally, if $w$ is incomparable with $\overline v$, then nothing is changed in $w$-subtree. Lemma~\ref{lem:misha-algorithm} is proven.
\end{proof}

It remains to show that $O(n)$ layers are enough if $N$ is large enough; for example, $N=3n$ will work. Here is the main observation. If a request for object $x_0$ at vertex $v_0$ is rejected, this means that $v_0$-subtree already carries at least $N$ objects. They were placed there according to some earlier requests (that were redirected from the previous layers). At most $n$ of these requests can be made for vertices that are on the path from root to $v_0$ (since at most $n$ objects are declared simple or every path). Therefore, at least $2n$ requests were made at vertices that are in the $v_0$-subtrees.  The requests are made for different objects, so we can take one for an object $x_1\ne x_0$, made at some vertex $v_1$ that is in the $v_0$-subtree.

Therefore, on the previous layer a request for $x_1\ne x_0$ was made at $v_1$ that is an extension of $v_0$, and it was rejected. The same reasoning for the previous layer shows that $2n$ earlier requests were accepted at that layer for different objects inside $v_1$-subtree. One of these objects is different from $x_0$ and $x_1$, so some request for an object $x_2\notin\{x_0,x_1\}$ at some vertex $v_2$ in $v_1$-subtree was rejected by the preceding layer, etc.

In this way we get a sequence of different objects $x_0,x_1,\ldots$ requested at vertices $v_0,v_1,\ldots$ where $v_{i+1}$ is an extension of $v_i$. This process continues until we come to the first layer or get $2n$ different objects, and the second case is impossible since all the objects $x_i$ are declared at comparable vertices, and by assumption at most $n$ different objects can be declared simple along a path. Therefore, we come to the first layer in at most $n+O(1)$ steps, so $O(n)$ and even $n+O(1)$ layers are enough. The statement (d) of Theorem~\ref{thm:no-quant-char} is proven.
\end{proof}
\section{Questions}

\begin{question}
Imagine Turing machines with two read-only input tapes; for such a machine $M$ consider a function $f_M$ such that $f_M(x,y)=z$ if $M$ stops at $x$ and $y$ on first and second tape respectively (reading all bits and not more) and produces $z$. Could we characterize the functions $f_M$ (called \emph{twice prefix free} in~\cite[page 242]{chernov-et-al}) or at least their domains? Such a domain is an enumerable set of pairs that does not contain two pairs $(x,y)$ and $(x',y')$ where $x$ is compatible with $x'$ (one is a prefix of the other) and $y$ is compatible with $y'$.  Still this necessary condition is not sufficient, as the following argument shows. Let $z_i$ be a computable sequence of pairwise incompatible strings (say, $z_i=0^i1$). Let $P$ and $Q$ be two enumerable sets that are inseparable (do not have a decidable separating set). Consider the set of pairs that contains
 \begin{itemize}
 \item $(z_i0,z_i0)$ for all $i$;
 \item $(z_i,z_i1)$ for $i\in P$;
 \item $(z_i1,z_i)$ for $i\in Q$.
 \end{itemize}
This set satisfies the necessary condition above (does not contain two compatible pairs). However, assume that some twice prefix free machine has this set as a domain. Then it should terminate after reading $z_i0$ on the first tape and $z_i0$ on the second tape.  Consider the last zero bits on both tapes. One of these bits should be read first (if they are read simultaneously, we may choose any of two). If this is the first bit, then $i\in P$ is impossible (since the machine cannot read $1$ on the second tape before reading $0$ on the first tape). For the same reason, $i\in Q$ is impossible if the second bit is read first. Therefore, a decidable separator exists.
 
 Can we add some conditions to get a characterization of domains of twice prefix free machines?  What do we get if we define stopping time complexity for pairs using machines of this type? Does it have some equivalent description (for example, can it be defined using monotone-conditional complexity with pairs as conditions,   Section~\ref{subsec:monot-cond})?
\end{question}

\begin{question}
Do we have $\KS(x\cnd x*)=\max_z\KS(x\cnd z)+O(1)$ where the maximum is taken over all \emph{finite} extensions $z$ of $x$? (The problem is that the compactness argument does not work anymore.)
\end{question}

\begin{question}
One may consider the function
$$
K(x,y)=\max\{\KS^Y(x)\colon \text{$Y$ is an infinite extension of $y$}\}
$$ 
We have shown that for $x=y$ it coincides with $\KS(x\cnd x*)$, showing that it does not exceed $\KS(x\cnd x*)$ and satisfies the quantitative restrictions of Theorem~\ref{thm:cardinality}. Both arguments remain valid (with minimal changes) for the general case, and we conclude that $K(x,y)$ defined in this way does not exceed $\KS(x\cnd y*)$ and also satisfied the cardinality restrictions  of Theorem~\ref{thm:no-quant-char}, (b). However, now these upper bound and lower bound differ, and we do not know where between them the function $K(x,y)$ lies. Does it coincide with its upper bound $\KS(x\cnd y*)$ for arbitrary $x$ and $y$, or with its lower bound, the minimal upper semicomputable function that satisfies the cardinality requirements (see Theorem~\ref{thm:no-quant-char}), or neither?
\end{question}

\subsection*{Acknowledgments}
The authors are grateful to Alexey Chernov, Volodya Vovk, members of the ESCAPE team (LIRMM, Montpellier), Kolmogorov seminar (Moscow) and Theoretical Computer Science Laboratory (National Research University Higher School of Economics, Computer Science department), and the participants of Dagstuhl meeting  where some results of the paper were presented~\cite{short}.


\begin{thebibliography}{9} 

\bibitem{short} M.~Andreev, G.~Posobin, A.~Shen,  Stopping time complexity, abstract, \emph{Computability Theory}, Report from Dagstuhl Seminar 17081 (February 2017), p.~97,
\url{http://dx.doi.org/10.4230/DagRep.7.2.89},
\url{http://drops.dagstuhl.de/opus/volltexte/2017/7354/pdf/dagrep_v007_i002_p089_s17081.pdf}

\bibitem{chernov-et-al} A.~Chernov, M.~Hutter, J.~Schmidhuber, Algorithmic complexity bounds on future prediction errors, \emph{Information and computation}, \textbf{205}, 242--261 (2007).

\bibitem{game-argument} An.A.~Muchnik, I.~Mezhirov, A.~Shen, N.~Vereshchagin, \emph{Game interpretation of Kolmogorov complexity}, \url{http://arxiv.org/abs/1003.4712}

\bibitem{usv} A.~Shen, V.A.~Uspensky, N.~Vereshchagin, \emph{Kolmogorov complexity and algorithmic randomness}, to be published by AMS; see~\url{http://www.lirmm.fr/~ashen/kolmbook-eng.pdf}. Russian version 
\ifxetex
(\emph{Колмогоровская сложность и алгоритмическая случайность}) 
\fi
was published in 2013 by MCCME Publishers, Moscow; the corrected version:~\url{http://www.lirmm.fr/~ashen/kolmbook.pdf}.

\bibitem{shenf0} A.~Shen, Algorithmic variants of the notion of entropy, \emph{Soviet Mathematics Doklady}, \textbf{29}(3), 569--573 (1984). 
\ifxetex
(Russian original: A.\,Шень, Алгоритмические варианты понятия энтропии, \emph{Доклады Академии наук СССР}, математика, \textbf{276}(3), 563--566 (1984).
\fi

\bibitem{shen-survey} A.~Shen, Around Kolmogorov complexity: Basic Notions and Results. \emph{Measures of Complexity. Festschrift for Alexey Chervonenkis}. Edited by V.~Vovk, H.~Papadopoulos, A.~Gammerman, p.~75--116. Springer-Verlag,  2015. ISBN 978-3-319-21851-9

\bibitem{us} V.~Uspensky, A.~Shen, Relations between varieties of Kolmogorov complexities, \emph{Mathematical Systems Theory}, \textbf{29}(3), 271--292.

\bibitem{vovk-pavlovic} Vladimir Vovk, Dusko Pavlovic. Universal probability-free conformal prediction. In: Alex Gammerman, Zhiyuan Luo, Jesus Vega, and Vladimir Vovk, editors, \emph{Proceedings of the Fifth International Symposium on Conformal and Probabilistic Prediction with Applications} (COPA 2016), v.~9653 of Lecture Notes in Artificial Intelligence, pages 40-47, Switzerland, 2016, Springer. See also: \url{https://arxiv.org/pdf/1603.04283.pdf} (March 2016; extended version, April 2017).

\end{thebibliography}
\end{document}